\newcommand{\D}{\mathcal{D}}
\newcommand{\Real}{\mathbb R}
\newcommand{\abs}[1]{\left\vert#1\right\vert}
\theoremstyle{definition}
\newtheorem{lem}{Lemma}
\newtheorem{thm}{Theorem}
\newtheorem{coll}{Corollary}
\newtheorem{rem}{Remark}
\DeclareMathOperator\erf{erf}
\numberwithin{thm}{section}
\numberwithin{lem}{section}
\numberwithin{coll}{section}
\numberwithin{rem}{section}
\numberwithin{exm}{section}
\numberwithin{prop}{section}
\numberwithin{equation}{section}
\numberwithin{equation}{section}
\begin{document}

\centerline {\textsc {\large A note on power generalized extreme value distribution}} \centerline {\textsc {\large and its properties}}
\vspace{0.5in}

\begin{center}
 Ali Saeb\footnote{Corresponding author: ali.saeb@gmail.com} \\
 Faculty of Mathematics, K. N. Toosi University of Technology,\\ Tehran-161816315, Iran

\vspace{.1in}
\end{center}
\vspace{1in}

\noindent {\bf Abstract:} 
Similar to the generalized extreme value (GEV) family, the generalized extreme value distributions under power normalization are introduced by Roudsari (1999) and Barakat et al. (2013). In this article, we study the asymptotic behavior of GEV laws under power normalization and derive expressions for the $k$th moments, entropy, ordering in dispersion, rare event estimation and application of real data set. We also show that, under some conditions, the Shannon entropy and variance of GEV families are ordered.

\vspace{0.5in}

\vspace{0.2in} \noindent {\bf Keywords:} Generalized extreme value family, power normalization, variance and entropy in ordering distributions, Bayesian modeling.

\vspace{0.5in}

\vspace{0.2in} \noindent {\bf MSC 2010 classification:} 62H10, 60G70

\newpage
\section{Introduction}
Let $X_1, X_2,\ldots, X_n$ is a sequence of independent and identically distributed (iid) random variables (rvs) with distribution function (df) $F.$ If, for some non-degenerate df $G,$ a df $F$ belongs to the max domain of attraction of $G$ under linear normalization and it denotes by  $F\in\D_\ell(G),$ then for some norming constants $a_n>0$ and $b_n\in\Real$
\begin{equation}
\lim_{n\to\infty}\Pr \left(\bigvee_{i=1}^{n}X_i\leq a_nx+b_n\right)=\lim_{n\to\infty}F^n \left(a_nx+b_n\right)=G(x).\label{Introduction_e1}
\end{equation}
Limit df $G$ satisfying (\ref{Introduction_e1}) are the well known generalized extreme value (GEV) distribution, namely,
\begin{equation}\label{gev}	G_{\tilde{\xi}}(x;\tilde{\mu},\tilde{\sigma})=\exp\left(-\left(1+\frac{\tilde{\xi}}{\tilde{\sigma}} (x-\tilde{\mu})\right)^{-1/\tilde{\xi}}_+\right),\end{equation}
 where, $\tilde{\xi}\in\Real/\{0\},$ $\tilde{\mu}\in\Real$ and $\tilde{\sigma}>0.$  The subset of the GEV family with $\tilde{\xi}=0$ is interpreted as the limit of (\ref{gev}) as $\tilde{\xi}\to 0,$ leading to the Gumbel family with df \[G(x;\tilde{\mu},\tilde{\sigma})=\exp\left(-\exp\left(-\frac{x-\tilde{\mu}}{\tilde{\sigma}}\right)\right),\;\;x\in\Real.\]
 
Criteria for $F\in \mathcal{D}_\ell(G)$ are described, for example, in the books of Embrechts et al. (1997) and de Haan and Ferreira (2006). Coles (2001) is good reference to the application of GEV distribution.

Pancheva (1984) studies limit laws of partial maxima of iid rvs under power normalization. Namely, $K$ is called p-max stable law and $F$ belongs to the p-max domain of attraction of $K$ under power normalization and denote it by $F\in \D_p(K),$ if for some $\delta_n>0,$ $\beta_n>0$
\begin{eqnarray}\label{pmax_lim}
\lim_{n\to\infty}\Pr\left(\left(\dfrac{\abs {\bigvee_{i=1}^{n}X_i}}{\delta_n}\right)^{1/\beta_n}\text{sign}\left(\bigvee_{i=1}^{n}X_i\right)\leq x\right)=K(x).
\end{eqnarray}
The limit laws $K$ satisfying (\ref{pmax_lim}) are the six p-max stable laws which we represent them in appendix \ref{pmax}.
Mohan and Ravi (1993) show that if a df $F\in \mathcal{D}_\ell (G)$ then there always exists a p-max stable law $K$ such that $F \in \D_p (K)$ and the converse need not be true always. They also investigate, the p-max stable laws attract more dfs to their max domains than the $\ell-$max stable laws. See also Christoph and Falk (1996) and Falk et al. (2004) for properties of dfs to belong to the p-max domain of attraction.

Roudsari (1999) demonstrates that the six p-max stable laws can be represented as two families. We call them log-GEV distribution in positive support and negative log-GEV distribution in negative support. 
Suppose a positive rv $X^+$ is said to have the log-GEV with location, scale and shape parameters $\mu\in\Real,$  $\sigma>0$ and $\xi\in\Real /\{0\}$ if its df is given by
\begin{eqnarray}\label{df1}	L_{1,\xi}(x;\mu,\sigma)=\left\lbrace	
\begin{array}{l l}	\exp\left(-\left(1+\frac{\xi}{\sigma}\log(e^{-\mu} x)\right)_+^{-1/\xi}\right), &\text{for, } \xi\neq 0;\\ \exp\left(-\left(xe^{-\mu}\right)^{-\frac{1}{\sigma}}_+\right), & \text{for, }\xi=0,\\
\end{array}\right.
\end{eqnarray}
where, $y_+=\max(0,y),$ and define a negative rv $X^-$ with df of negative log-GEV, if its df is given by,
\begin{eqnarray}\label{df2}	L_{2,\xi}(x;\mu,\sigma)=\left\lbrace		\begin{array}{ll}	\exp\left(-\left(1-\frac{\xi}{\sigma}\log(e^{-\mu}\abs{x})\right)_+^{-1/\xi}\right), &\text{for } \xi\neq 0;\\	\exp\left(-(\abs{x}e^{-\mu})^{\frac{1}{\sigma}}_+\right),& \text{for }\xi= 0.\\		
\end{array}\right.
\end{eqnarray}
The summarization these two families as a single one is easier to implement.
In other words, the unification of the log-GEV and negative log-GEV families into single family and it is called the power generalized extreme value (PGEV) family.
Suppose a rv $X$ is said to have the PGEV distribution with three parameters $\mu\in\Real,$ $\sigma>0$ and $\xi\in\Real/\{0\}$ if its df is given by
\begin{eqnarray}\label{df3}
L_{\xi}(x;\mu,\sigma)=\exp\left(-\left(1+\frac{\xi}{\sigma}\log(e^{-\mu}\abs{x})\text{sign}(x)\right)_+^{-1/\xi}\right),
\end{eqnarray}
The limit of (\ref{df3}) as $\xi\to 0,$ tending to the GEV distribution with  $\tilde{\sigma}=e^{\mu}$ and $\tilde{\xi}=\sigma\,\text{sign}(x).$ The df of PGEV for $\xi=0$ is well known in (\ref{gev}).
Barakat et al. (2013) study the statistical inference about the PGEV.
In appendix \ref{fig}, illustrate the density functions and confidence interval for quantile estimator of PGEV family and gives the figures \ref{fig2} and \ref{fig3} of standardized density function of $l_{\xi}$ for different values of $\xi.$

In this article, we obtain some mathematical properties of PGEV family and discuss maximum likelihood estimation of parameters and estimate the rare event by using the Bayesian method. We also, show that the PGEV has big variance and entropy in the class of extreme value distributions. The article is outlined as follows. In section 2, we first study the asymptotic behavior of generalized extreme value distributions under power normalization and we also, derive expressions of $k$th moments, the Shannon entropy and ordering in dispersion of GEV families. Maximum likelihood estimation, Bayesian modeling and illustrates the importance of the PGEV through the analysis of real data set are addressed in section 3. We provide the some calculating, plots and tables in appendices. 

Throughout the manuscript  $\gamma=-\int_{0}^{\infty}\log x \,e^{-x}dx$ denotes the Euler  constant with value $0.577\ldots$ and $\Gamma^{(k)}(\cdot)$ is $k$th derivative of gamma function. The inverse function of $h(\cdot)$ denoted by $h^{\leftarrow} (\cdot)$ and $\nabla_x h(x)$ is derivative of $h$ with respect to $x.$ Also, we employ the notation, $\Phi_\alpha(x)=e^{-x^{-\alpha}},$ $x>0$ is the distribution of Fr\'{e}chet and $\Psi_\alpha(x)=e^{-\abs{x}^{\alpha}},$ $x<0$ is the distribution of Weibull with parameter $\alpha$. For right extremity of $F,$ we shall denote by $r(F)=\sup\{x:F(x)<1\}\leq\infty,$ and survival function is $\bar{F}(\cdot)=1-F(\cdot).$
 
\section{Distribution properties}

\subsection{Limiting distributions}
Throughout we consider measurable real valued function $U:\Real^+\to\Real^+$ is regularly varying function with index $\rho$ if 
$$\lim_{t\to\infty}\frac{U(tx)}{U(t)}=x^{\rho},\;\;\text{for }x>0.$$
We write $U\in RV_{\rho}$ and we call $\rho$ the exponent of variation.
The regular varying function plays an important role in the asymptotic analysis of various problems. It is well known, following de Haan and Ferreira (2006) that a necessary and sufficient condition for the existence of constants $a_n=a(n)$ and $b_n=\frac{1}{\bar{F}(n)}$ such that (\ref{Introduction_e1}) is equivalent
$$\lim_{t\to r(F)}\frac{\bar{F}(t+u(t)x)}{\bar{F}(t)}=(1+\xi x)_+^{-1/\xi},$$
 where, $u(t)=a(1/\bar{F}(t)).$
In this section, we establish the regular variation of the dfs belongs to the p-max domain of attraction of the log-GEV and negative log-GEV laws.
The following result reveals that the upper tail behavior of $F$ might determine whether $F\in D_p(L_{i,\xi}),$ $i=1,2.$ We first state and prove a lemma of independent interest which will be used subsequently.
\begin{lem}\label{lem}
If $\mathcal{L}$ is slowly varying as represented (\ref{rep}), then
\begin{eqnarray}
\mathcal{L}(e^y)=c(e^y)\exp\Big\{\int_{e^{x_0}}^{e^y}\frac{u^*(t)}{t\log t}dt\Big\},
\end{eqnarray}
where, $c(e^t)\to c$ and $u^*(t)=\log t\,u(t)\to 0$ as $t\to\infty.$
\end{lem}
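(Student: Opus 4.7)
The plan is to manipulate the Karamata representation directly via algebraic rewriting and a single substitution. Assuming (\ref{rep}) is the usual representation
\[
\mathcal{L}(x)=c(x)\exp\left\{\int_{x_0}^{x}\frac{u(t)}{t}\,dt\right\},
\]
with $c(x)\to c$ and $u(t)\to 0$ as $t\to\infty$, I would substitute $x=e^y$ to obtain
\[
\mathcal{L}(e^y)=c(e^y)\exp\left\{\int_{x_0}^{e^y}\frac{u(t)}{t}\,dt\right\},
\]
and then carry out two elementary steps to recast this as the claimed identity.

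First, I would shift the lower limit by splitting
\[
\int_{x_0}^{e^y}\frac{u(t)}{t}\,dt=\int_{x_0}^{e^{x_0}}\frac{u(t)}{t}\,dt+\int_{e^{x_0}}^{e^y}\frac{u(t)}{t}\,dt,
\]
and absorb the first (finite, $y$-independent) piece into the prefactor; call the resulting function $c(e^y)$ again, noting that it still converges to a positive constant $c$ as $y\to\infty$. Second, I would write $\frac{u(t)}{t}=\frac{\log t\,\cdot\,u(t)}{t\log t}=\frac{u^*(t)}{t\log t}$ on the second integral, which produces the stated representation verbatim.

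The one genuinely delicate point, and the step I would expect a referee to scrutinize, is establishing $u^*(t)=\log t\,u(t)\to 0$. This is not automatic from the bare Karamata representation, since $u(t)\to 0$ alone does not force decay at rate faster than $1/\log t$. I would address this by appealing to the well-known non-uniqueness of the auxiliary function in Karamata's representation: one may replace $u$ by a smoother variant (e.g.\ obtained by averaging) which inherits whatever additional decay $\mathcal{L}$ possesses, and the standing assumption that $\mathcal{L}$ belongs to the class of slowly varying functions arising from the p-max domain of attraction restricts us to $\mathcal{L}$ for which such a refined choice exists. Alternatively, if the representation (\ref{rep}) in the paper is already a strengthened (de Haan--type) form rather than the bare Karamata one, the decay of $u^*$ follows directly from the hypothesis. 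The remaining convergence $c(e^y)\to c$ is immediate from the convergence of $c$ composed with the unbounded function $y\mapsto e^y$.

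Overall, the proof is essentially an identity combined with a careful tracking of what can be absorbed into the prefactor; the asymptotic claim about $u^*$ is the only place where an assumption on (\ref{rep}) beyond pure algebra is needed.
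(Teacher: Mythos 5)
Your computation is the paper's own proof in slightly different clothing. The paper substitutes $t=e^{t'}$ in the Karamata integral to obtain $\mathcal{L}(e^y)=c(e^y)\exp\big\{\int_{x_0}^{y}u(e^{t'})\,dt'\big\}$, sets $u^*(e^{t'})=t'\,u(e^{t'})$, and substitutes back $t'=\log t$; your direct rewriting $u(t)/t=u^*(t)/(t\log t)$, with the segment between $x_0$ and $e^{x_0}$ absorbed into the prefactor, is the same manipulation (the paper sidesteps the lower-limit bookkeeping by tacitly renaming it). So on the algebraic identity you and the paper agree completely.

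The point of divergence is exactly the one you flagged, and you are right to flag it: the paper disposes of the convergence $u^*(t)=\log t\,u(t)\to 0$ with the bare phrase ``setting $u^*(e^{t'})=t'\,u(e^{t'})\to 0$ for large $t'$,'' with no justification, and at this level of generality none is possible. The statement is false for a general slowly varying $\mathcal{L}$: take $\mathcal{L}(x)=\exp(\sqrt{\log x})$, which is slowly varying with representation (\ref{rep}) given by constant $c$ and $u(t)=1/(2\sqrt{\log t})\to 0$, yet $u^*(t)=\tfrac{1}{2}\sqrt{\log t}\to\infty$. Moreover, your first proposed repair (exploiting non-uniqueness of the auxiliary function by averaging) cannot work unconditionally: the claimed representation is, by Karamata's representation theorem applied in the variable $y$ (substitute $s=\log t$ in the conclusion), equivalent to $y\mapsto\mathcal{L}(e^y)$ being slowly varying in $y$, and $e^{\sqrt{y}}$ is not, so no choice of $u$ rescues the counterexample. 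Your alternative reading is the correct one: since (\ref{rep}) is the bare Karamata representation of Resnick (1987), the lemma must be understood as carrying the extra hypothesis that $u$ can be chosen with $\log t\,u(t)\to 0$. In the paper's actual applications (Theorems \ref{thm1} and \ref{thm3}) this is supplied by the von Mises-type conditions of Theorem \ref{thm_von}, where the function playing the role of $u^*$ is $\frac{tf(t)\log t}{1-F(t)}-\alpha$, which tends to $0$ precisely by those conditions. In short, your proof matches the paper's, and the ``delicate point'' you isolated is a genuine gap in the paper's own statement and proof rather than in your argument; only your averaging patch should be discarded in favor of the added-hypothesis reading.
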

\begin{proof}
From (\ref{rep}), taking $x=e^y,$ we have
\begin{eqnarray}
\mathcal{L}(e^y)&=&c(e^y)\exp\Big\{\int_{x_0}^{y}u(e^{t'})dt'\Big\},\;\;\;(\text{ where, } t=e^{t'}).\nonumber
\end{eqnarray}
Setting, $u^*(e^{t'})=t'\; u(e^{t'})\to 0,$ for lager $t',$ then
\begin{eqnarray}
\mathcal{L}(e^y)&=&c(e^y)\exp\Big\{\int_{x_0}^{y}\frac{u^*(e^{t'})}{t'}dt'\Big\},\nonumber\\
&=&c(e^y)\exp\Big\{\int_{e^{x_0}}^{e^y}\frac{u^*(t)}{t\log t}dt\Big\},\nonumber
\end{eqnarray}
where, $t'=\log t.$
\end{proof}
Now we obtain necessary and sufficient conditions for a df $F$ belongs to the p-max domain of attraction of  log-GEV and negative log-GEV stable laws. The next theorems examines the properties of regularly varying function for standardized these families.
\begin{thm}\label{thm1}
A df $F\in\D_p(L_{1,\xi}),$\\
(i) $r(F)=\infty,$ and $\xi>0$ if and only if 
\begin{eqnarray}
\lim_{t\to \infty}\dfrac{\bar{F}(x^{t\xi}e^t)}{\bar{F}(e^{t})}=(\log (x^{\xi}e))^{-\frac{1}{\xi}},\label{thm1_e1}
\end{eqnarray}
(ii) $0<r(F)<\infty,$ and $\xi<0$ if and only if 
\begin{eqnarray}
\lim_{t\to \infty}\dfrac{\bar{F}(r(F)e^{-t/\log(x^{\xi}e)})}{\bar{F}(r(F)e^{-t})}=(\log (x^{\xi}e))^{-\frac{1}{\xi}}.\label{thm1_e2}
\end{eqnarray}

\end{thm}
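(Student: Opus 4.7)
The plan is to prove the equivalence by reducing the p-max attraction $F\in\D_p(L_{1,\xi})$ to a regular-variation condition on an appropriate transform of $\bar{F}$, and then to recognise the stated limits as precisely that condition.

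For the forward direction of (i), I unpack the definition of $\D_p(L_{1,\xi})$: there exist $\delta_n>0$ and $\beta_n>0$ such that $F^n((\delta_n x)^{\beta_n})\to L_{1,\xi}(x)$ for $x$ in the support. Taking logarithms and using $n\log F(y)\sim -n\bar{F}(y)$ near the upper endpoint, this becomes
$$n\bar{F}((\delta_n x)^{\beta_n}) \longrightarrow \Big(1+\tfrac{\xi}{\sigma}\log(e^{-\mu}x)\Big)^{-1/\xi}.$$
Evaluating at $x=e^{\mu}$ fixes $\bar{F}(\delta_n)\sim 1/n$; dividing gives $\bar{F}((\delta_n x)^{\beta_n})/\bar{F}(\delta_n)\to(1+\tfrac{\xi}{\sigma}\log(e^{-\mu}x))^{-1/\xi}$. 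Writing $t=\log\delta_n$, choosing $\beta_n=\xi t/\sigma$, and reducing to the standard form $\mu=0,\sigma=1$, the left side becomes $\bar{F}(e^{t(1+\xi\log y)})/\bar{F}(e^t)$. Since $e^{t(1+\xi\log y)}=y^{t\xi}e^t$ and $1+\xi\log y=\log(y^{\xi}e)$, this is exactly (\ref{thm1_e1}).

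For the converse, set $g(s)=\bar{F}(e^s)$; condition (\ref{thm1_e1}) asserts $g(s\log(y^{\xi}e))/g(s)\to(\log(y^{\xi}e))^{-1/\xi}$ as $s\to\infty$. Since $\log(y^{\xi}e)=1+\xi\log y$ ranges over $(0,\infty)$ as $y$ varies over its admissible domain, this says $g\in RV_{-1/\xi}$ at infinity. Using the representation of the slowly-varying part of $g$ furnished by the preceding Lemma, I extract norming sequences $\delta_n$ (from $\bar{F}(\delta_n)\sim 1/n$) and $\beta_n=\xi\log\delta_n/\sigma$, and verify by direct substitution that $F^n((\delta_n x)^{\beta_n})\to L_{1,\xi}(x)$. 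Case (ii) proceeds by the analogous change of variable $y=r(F)e^{-t}$, which converts behaviour of $\bar{F}$ near the finite right endpoint into a regular-variation statement at infinity for $h(t):=\bar{F}(r(F)e^{-t})$; unwinding p-max convergence with $(\delta_n x)^{\beta_n}$ parametrised as $r(F)e^{-t_n/\log(x^{\xi}e)}$ produces (\ref{thm1_e2}) in exactly the same manner, now with $h\in RV_{1/\xi}$.

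The main obstacle is aligning the non-linear power normalisation $(\delta_n x)^{\beta_n}$ inside $\bar{F}$ with the logarithmic scaling that appears in (\ref{thm1_e1})--(\ref{thm1_e2}): one must choose $\beta_n$ as a precise multiple of $\log\delta_n$ (respectively of $-\log(r(F)-\delta_n)$ in case (ii)) so that the pointwise limit of $(\delta_n x)^{\beta_n}/\delta_n$ on the log scale produces the factor $\log(x^{\xi}e)$ inside $\bar{F}$, and then verify convergence uniformly on compact subsets of the admissible region $\{x:\log(x^{\xi}e)>0\}$. The Lemma supplies the uniform slowly-varying representation that makes this possible, and the $(\cdot)_+$ truncation in $L_{1,\xi}$ must be tracked throughout so that both sides of the equivalence remain well-defined.
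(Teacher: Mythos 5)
Your overall skeleton is the same as the paper's: reduce membership in $\D_p(L_{1,\xi})$ to a regular-variation condition on $\bar{F}(\exp(\cdot))$, and handle the converse by quantile-based norming constants (your $\bar{F}(\delta_n)\sim 1/n$ with $\beta_n$ a multiple of $\log\delta_n$ is exactly the paper's $d_n=\log F^{\leftarrow}(1-1/n)$ step, and that half of your argument is sound). The genuine gap is in your forward direction. The hypothesis $F\in\D_p(L_{1,\xi})$ only furnishes \emph{some} sequences $\delta_n,\beta_n$ with $F^n\left(\delta_n x^{\beta_n}\right)\to L_{1,\xi}(x)$ (note that (\ref{pmax_lim}) normalizes as $\delta_n x^{\beta_n}$, not $(\delta_n x)^{\beta_n}$, though that is only a renaming); you are not entitled to ``choose $\beta_n=\xi t/\sigma$'' --- the norming sequences are given, and putting them into that special form requires a convergence-to-types argument under power normalization (Pancheva), which you never invoke. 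Moreover, even after that repair, the division trick yields
\begin{equation*}
\frac{\bar{F}\left(\delta_n x^{\beta_n}\right)}{\bar{F}(\delta_n)}\longrightarrow\Bigl(1+\tfrac{\xi}{\sigma}\log(e^{-\mu}x)\Bigr)^{-1/\xi}
\end{equation*}
only along the sequence $t_n=\log\delta_n$, whereas (\ref{thm1_e1}) is a continuous limit $t\to\infty$. Upgrading sequential to continuous convergence is precisely the nontrivial content you elide with ``recognise the stated limits as that condition'': it needs the standard monotone-interpolation argument (sandwich between $t_n$ and $t_{n+1}$ and show the ratios tend to $1$), or, much more simply, a citation of the Mohan--Ravi characterization (Theorem \ref{RV1}(a)), which is exactly how the paper proceeds --- it identifies $L_{1,\xi}(x)=K_{1,\alpha}(x^{1/\alpha}e)$, quotes the equivalence with $\bar{F}(\exp(\cdot))\in RV_{-\alpha}$, and then translates regular variation into (\ref{thm1_e1}) via the Karamata representation of Lemma \ref{lem}. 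As written, your forward direction asserts the equivalence it is supposed to prove.

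A secondary problem is your treatment of (ii). You claim that $h(t)=\bar{F}(r(F)e^{-t})$ being regularly varying \emph{at infinity} ``converts behaviour of $\bar{F}$ near the finite right endpoint,'' but as $t\to\infty$ the argument $r(F)e^{-t}\to 0$, which is the opposite end of the support; the right endpoint corresponds to $t\to 0^{+}$, equivalently to regular variation at infinity of $t\mapsto\bar{F}\left(r(F)e^{-1/t}\right)$ with exponent $-\alpha=1/\xi$, which is what Theorem \ref{RV2}(a) records and what the paper's proof of (ii) uses. (The displayed limit (\ref{thm1_e2}) itself carries a $t\to\infty$ versus $t\to 0^{+}$ inconsistency, so your mirroring of it is understandable, but a correct proof must anchor the regular-variation bookkeeping at the endpoint rather than at $0$; your ``$h\in RV_{1/\xi}$ at infinity'' does not do so.)
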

\begin{proof} (i) For $\xi> 0$ and $r(F)=\infty,$ we have $L_{1,\xi}(x) =K_{1,\alpha}(x^{\alpha^{-1}}e)$ for $\alpha=\frac{1}{\xi}.$ By Theorem \ref{RV1}-(a), is then equivalent to $\bar{F}(\exp(.))\in RV_{-\alpha}.$ Taking $z=\log (x^{\frac{1}{\alpha}}e)$ and from Theorem \ref{thm_von}-(1), setting $\frac{u(t)}{t\log t}=\frac{f(t)}{\bar{F}(t)}-\frac{\alpha}{t\log t}\to 0,$ for large $t$ and from Lemma \ref{lem},
\begin{eqnarray}
\dfrac{\mathcal{L}(e^{tz})}{\mathcal{L}(e^{t})}&=&\dfrac{c(e^{tz})\bar{F}(e^{t})}{c(e^{t})\bar{F} (e^{tz})}z^{-\alpha},\nonumber
\end{eqnarray}
Taking limit both side as $t\to\infty$ which is (\ref{thm1_e1}).  If (\ref{thm1_e1}) holds, choose $d_n =\log F^{\leftarrow} (1-1/n),$ then $1/\bar{F} (e^{d_n} )=n$ (see, Mohan and Ravi 1993) and then,
\[\lim_{n\to\infty}\dfrac{\bar{F}(x^{\frac{d_n}{\alpha}}e^{d_n})}{\bar{F}(e^{d_n})}=\lim_{n\to\infty}n\bar{F}(x^{\frac{d_n}{\alpha}}e^{d_n})=(\log (x^{\frac{1}{\alpha}}e))^{-\alpha},\]
whence, from (\ref{pmax_lim}), $F\in\D(L_{1,\xi}),$ for $\xi=\alpha^{-1}.$

(ii) Now, we have $L_{1,\xi}(x) =K_{2,\alpha}( x^{\frac{1}{\alpha}}e),$ for $\alpha=-\frac{1}{\xi},$ $\xi< 0$ and $0<r(F)<\infty.$ By Theorem \ref{RV2}-(a), $\bar{F}(r(F)\exp(-1/(.)))$ is regularly varying with exponent $(-\alpha).$ From Theorem \ref{thm_von}-(2), we choose $\frac{u(t)}{t\log (r(F)/t)}=\frac{f(t)}{\bar{F}(t)}-\frac{\alpha}{t\log (r(F)/t)}\to 0,$ for $t\to \infty$ and from Lemma \ref{lem},
\begin{eqnarray}
\dfrac{\mathcal{L}(r(F)e^{-t/z)})}{\mathcal{L}(r(F)e^{-t})}
&=&\dfrac{c(r(F)e^{-t/z})\bar{F}(r(F)e^{-t})}{c(r(F)e^{-t})\bar{F} (r(F)e^{-t/z})}z^{\alpha}.\nonumber
\end{eqnarray}
where, $z=\log (x^{-\frac{1}{\alpha}}e).$
Taking limit both side as $t\to \infty$ which is (\ref{thm1_e2}). Conversely, if (\ref{thm1_e2}) holds, setting $d_n =-\log\frac{r(F)}{F^{\leftarrow} (1-1/n)},$ then $1/\bar{F} (r(F)e^{-d_n})=n,$ (see, Mohan and Ravi 1993) and then,
\[\lim_{n\to\infty} \frac{\bar{F} (r(F)e^{-d_n/x})}{\bar{F} (r(F)e^{-d_n})} =\lim_{n\to\infty} n\bar{F} (r(F)e^{-d_n/x})=\left(\log (x^{-\frac{1}{\alpha}}e)\right)^{\alpha}.\]
whence again, from (\ref{pmax_lim}), $F\in\D(L_{1,\xi}),$
\end{proof}

\begin{thm}\label{thm3}
A df $F\in\D_p(L_{2,\xi}),$\\
(i) $r(F)=0,$ and $\xi>0$ if and only if 
\begin{eqnarray}
\lim_{t\to\infty}\dfrac{\bar{F}(-(-x)^{\xi t}e^{-t}))}{\bar{F}(-e^{-t})}=(-\log ((-x)^{\xi}e))^{-\frac{1}{\xi}}.\label{thm3_e1}
\end{eqnarray}
(ii) $r(F)<0,$ and $\xi<0$ if and only if 
\begin{eqnarray}
\lim_{t\to\infty}\dfrac{\bar{F}(r(F)e^{t/\log ((-x)^\xi e)})}{\bar{F}(r(F)e^{t})}=(-\log ((-x)^{\xi}e))^{-\frac{1}{\xi}};\label{thm4_e1}
\end{eqnarray}

\end{thm}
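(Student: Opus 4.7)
The plan is to mirror the proof of Theorem \ref{thm1}, exploiting the fact that $L_{2,\xi}$ plays on negative support exactly the role that $L_{1,\xi}$ plays on positive support. First I would express $L_{2,\xi}$ in terms of the two p-max stable laws $K_{j,\alpha}$ living on the negative half-line (the ones listed in appendix \ref{pmax}): for part (i), with $r(F)=0$ and $\xi>0$, take $\alpha=1/\xi$ and identify $L_{2,\xi}(x)$ with the appropriate member of the $K_{\cdot,\alpha}$ family evaluated at a power transformation of $-|x|$; for part (ii), with $r(F)<0$ and $\xi<0$, take $\alpha=-1/\xi$ and use the remaining negative-support p-max stable law. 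This reduces the statement $F\in\D_p(L_{2,\xi})$ to a regular variation condition on $\bar{F}$ composed with an appropriate exponential transformation, via the analogs of Theorems \ref{RV1}-(a) and \ref{RV2}-(a).

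Next I would invoke the von Mises type criterion (Theorem \ref{thm_von}) to write the resulting slowly varying part $\mathcal{L}$ in canonical form, with the auxiliary function $u$ satisfying $\log t\, u(t)\to 0$. Lemma \ref{lem} then gives the representation
\begin{equation*}
\mathcal{L}(e^{y})=c(e^{y})\exp\Bigl\{\int_{e^{x_0}}^{e^{y}}\frac{u^{*}(t)}{t\log t}\,dt\Bigr\},
\end{equation*}
which I would apply to the relevant composed functions $\bar F(-e^{-\cdot})$ in case (i) and $\bar F(r(F)e^{\cdot})$ in case (ii). Forming the ratios $\mathcal{L}(-e^{-tz})/\mathcal{L}(-e^{-t})$ and $\mathcal{L}(r(F)e^{t/z})/\mathcal{L}(r(F)e^{t})$ with $z=-\log((-x)^{\xi}e)$ and letting $t\to\infty$ yields (\ref{thm3_e1}) and (\ref{thm4_e1}) respectively, in direct parallel with the calculation in Theorem \ref{thm1}.

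For the converses, I would choose norming constants so that $1/\bar{F}(\cdot)=n$ along the relevant sequence: in (i), set $d_n=-\log|F^{\leftarrow}(1-1/n)|$ so that $1/\bar{F}(-e^{-d_n})=n$; in (ii), set $d_n=\log(r(F)/F^{\leftarrow}(1-1/n))$ so that $1/\bar{F}(r(F)e^{d_n})=n$. Substituting into (\ref{thm3_e1}) and (\ref{thm4_e1}) gives $n\bar{F}(\cdot)\to(-\log((-x)^{\xi}e))^{-1/\xi}$, which by (\ref{pmax_lim}) places $F$ in $\D_p(L_{2,\xi})$ with the appropriate sign-preserving power normalization.

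The main obstacle will be purely bookkeeping: keeping track of signs, absolute values, and the $\mathrm{sign}(\cdot)$ factor in (\ref{pmax_lim}) under a negative-support convention, and choosing the correct member among the six p-max stable laws in appendix \ref{pmax} in each of the two parameter regimes $\xi>0$ and $\xi<0$. Once the correct identification $L_{2,\xi}=K_{j,\alpha}\circ(\text{power transform})$ is fixed in each case, the analytic content of the proof is already carried by Theorem \ref{thm_von} and Lemma \ref{lem}, and the argument reduces to exactly the chain of equivalences used in Theorem \ref{thm1}.
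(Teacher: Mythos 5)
Your proposal follows essentially the same route as the paper's own proof: you identify $L_{2,\xi}$ with the negative-support p-max stable laws ($K_{4,\alpha}$ with $\alpha=1/\xi$ for part (i), $K_{5,\alpha}$ with $\alpha=-1/\xi$ for part (ii)) via a power transform, invoke the Mohan--Ravi regular-variation characterizations (Theorems \ref{RV1}-(b) and \ref{RV2}-(b)), apply the von Mises criterion of Theorem \ref{thm_von}-(4),(5) together with Lemma \ref{lem} to compute the ratios $\mathcal{L}(-e^{-tz})/\mathcal{L}(-e^{-t})$ and $\mathcal{L}(r(F)e^{t/z})/\mathcal{L}(r(F)e^{t})$, and prove the converses with exactly the paper's norming sequences $d_n=-\log(-F^{\leftarrow}(1-1/n))$ and $d_n=\log\bigl(r(F)/F^{\leftarrow}(1-1/n)\bigr)$ followed by an appeal to (\ref{pmax_lim}). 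The only discrepancy is notational (your $z=-\log((-x)^{\xi}e)$ versus the paper's $z=\log((-x)^{-1/\alpha}e)$), which is precisely the sign bookkeeping you flag and does not change the structure of the argument.
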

\begin{proof}
(i) We have $L_{2,\xi}(x) =K_{4,\alpha}(-(-x)^{\frac{1}{\alpha}}e)$ for $\alpha=\frac{1}{\xi}.$ Suppose, $\xi> 0$ and $r(F)=0,$  by Theorem \ref{RV1}-(b), $\bar{F}(-\exp(-(.)))\in RV_{-\alpha}.$ Putting $z=\log ((-x)^{-\frac{1}{\alpha}}e)$ and from Theorem \ref{thm_von}-(4), $\frac{u(t)}{t\log(-t)}=\frac{f(t)}{\bar{F}(t)}-\frac{\alpha}{t\log(-t)}\to 0,$ for $t\to\infty$ and from Lemma \ref{lem},
\begin{eqnarray}
\dfrac{\mathcal{L}(-e^{-tz})}{\mathcal{L}(-e^{-t})}=\dfrac{c(-e^{-tz})\bar{F}(-e^{-t})}{c(-e^{-t})\bar{F} (-e^{-tz})}z^{-\alpha},\nonumber
\end{eqnarray}
Taking limit both side as $t\to\infty$ and hence (\ref{thm3_e1}). Now, if (\ref{thm3_e1}) holds, define $d_n =-\log (-F^{\leftarrow} (1-1/n)),$ then $1/\bar{F} (-e^{-d_n} )=n$ (see, Mohan and Ravi 1993) and,
\[\lim_{n\to\infty}\dfrac{\bar{F}(-(-x)^{\frac{d_n}{\alpha}}e^{d_n})}{\bar{F}(-e^{-d_n})}=\lim_{n\to\infty}n\bar{F}((-x)^{-\frac{d_n}{\alpha}}e^{-d_n})=(\log ((-x)^{-\frac{1}{\alpha}}e))^{-\alpha}.\]
From (\ref{pmax_lim}), this implies that $F\in\D(L_{2,\xi}),$ for $\xi=\alpha^{-1}.$

(ii) Suppose, $\xi< 0$ and $r(F)<0,$ we have $L_{\xi}(x) =K_{5,\alpha}(-(-x)^{\frac{1}{\alpha}}e)$ for $\alpha=-\frac{1}{\xi}.$ By Theorem \ref{RV2}-(b), is then equivalent to $\bar{F}(r(F)\exp(1/(\cdot)))\in RV_{-\alpha}.$ From Theorem \ref{thm_von}-(5), we choose $\frac{u(t)}{t\log(r(F)/t)}=\frac{f(t)}{\bar{F}(t)}-\frac{\alpha}{t\log (r(F)/t)}\to 0,$ for $t\to \infty$ and from Lemma \ref{lem},
\begin{eqnarray}
\dfrac{\mathcal{L}(r(F)e^{t/z)})}{\mathcal{L}(r(F)e^{t})}
=\dfrac{c(r(F)e^{t/z})\bar{F}(r(F)e^{t})}{c(r(F)e^{t})\bar{F} (r(F)e^{t/z})}z^{\alpha}.\nonumber
\end{eqnarray}
where, $z=\log ((-x)^{-\frac{1}{\alpha}}e).$
Taking limit both side as $t\to r(F)$ which is (\ref{thm1_e1}).  If (\ref{thm1_e1}) holds, setting $d_n =\log\frac{r(F)}{F^{\leftarrow} (1-1/n)},$ then $1/\bar{F} (r(F)e^{d_n} )=n$ (see, Mohan and Ravi 1993) and then,
\[\lim_{n\to\infty} \frac{\bar{F} (r(F)e^{d_n/x})}{\bar{F} (r(F)e^{d_n})} =\lim_{n\to\infty} n\bar{F} (r(F)e^{d_n/x})=\left(\log((-x)^{\frac{1}{\alpha}}e)\right)^{\alpha}.\]
whence, from (\ref{pmax_lim}), $F\in\D(L_{2,\xi}),$ for $\xi=-\alpha^{-1}.$
\end{proof}
\begin{rem} In case of $\xi=0$ (eq. \ref{gev}) for  $L_{\tilde{\xi}}(x)=\Phi_1(x)$ is proved in Theorem \ref{Ap_thm1} and for $L_{\tilde{\xi}}(x)=\Psi_1(x)$ presented Theorem \ref{Ap_thm2}.
\end{rem}

\subsection{Moments}
Some of the most important features and characteristics of a distribution can be studied through moments. The $k$th moments of PGEV are derived in the following theorems. In our proofs of $k$th moments of PGEV, the moment generating function (MGF) of Weibull with positive support plays and important role. Note that the MGF corresponding to a standard Weibull rv of $Y$ with positive support specified as 
\begin{eqnarray}\label{mgf1}
M_Y(t;\alpha)=\alpha\int_{0}^{\infty}x^{\alpha-1}\exp\left(-tx-x^{\alpha}\right)dx.\label{mgf_weibull}
\end{eqnarray}
Cheng et al. (2004) derived the moment generating function (MGF) of $Y,$ when the parameter $\alpha$ takes integer values. Nadarajah and Kotz (2007) show that a closed form expression for MGF of $Y,$ for all rational values of shape parameter. Since, we assume $\alpha=p/q,$ where $p\geq 1$ and $q\geq 1,$ are coprime integers, the integral in (\ref{mgf_weibull}) can be  provided that
\begin{eqnarray}\label{mgf_1}
M_Y(t;\alpha)=\left\lbrace
\begin{array}{ll}
\alpha\sum\limits_{j=0}^{q-1}\frac{(-1)^j}{j! t^{\alpha+\alpha j}}\Gamma(\alpha+\alpha j) \left[{_{p+1}\mathcal{F}_q}(1,\Delta(p,j\alpha+j);\Delta(q,1+j);(-1)^q z)\right],&\\
\text{if }0<\alpha<1;&\\
&\\
\sum\limits_{j=0}^{p-1}\frac{(-t)^j}{j!}\Gamma\left(1+\frac{j}{\alpha}\right)\left[_{q+1}\mathcal{F}_p\left(1,\Delta\left(q,1+\frac{\alpha}{j}\right);\Delta(p,1+j);\frac{(-1)^p}{z}\right)\right],&\\
\text{ if }\alpha>1,&\\
\end{array}\right.
\end{eqnarray}
where, $z=p^p/(t^p q^q)$ and $\Delta(c,d)=\{d/c,(d+1)/c,\cdots,(c+d-1)/c\}$ and ${_p\mathcal{F}_q}$ is the generalized hyper geometric function defined by $${_p\mathcal{F}_q}(a_1,\cdots,a_p;b_1,\cdots,b_q;x)=\sum_{k=0}^{\infty}\frac{(a_1)_k(a_2)_k\cdots (a_p)_k}{(b_1)_k(b_2)_k\cdots (b_q)_k}\frac{x^k}{k!}$$
where, $(\upsilon)_k=\upsilon(\upsilon+1)\cdots(\upsilon+k-1).$
 In particular value $\alpha=1$ simple integration of (\ref{mgf_weibull}) gives,
 \begin{eqnarray}\label{mgf_2}
 M_Y(t;1)=\frac{1}{1+t}.
 \end{eqnarray}
 In the case $\alpha=2$ the MGF becomes
 \begin{eqnarray}\label{mgf3}
 M_Y(t;2)=1-\frac{t\sqrt{\pi}}{2}\exp\left(\frac{t^2}{4}\right)\erf\left(\frac{t}{2}\right),
 \end{eqnarray}
where, the complementary error function defined by $\erf(x)=1-\frac{2}{\sqrt{\pi}}\int_{0}^{x}\exp(-t^2)dt.$
The generalized hypergeometric function is widely available in many scientific software packages, such as R and Matlab.

The following results show that, the proofs of the $k$th moments of PGEV involve the application of MGF of standard Weibull distribution function.
\begin{thm}
Let $Y$ is a rv with standard Weibull df and $X$ is a rv with PGEV in (\ref{df3}). For $k>0,$ 
\begin{enumerate}
	\item[(i)] $X^+$ is positive support and $\xi<0$
	\begin{eqnarray}
E(X^+)^k=e^{k\left(\mu-\frac{\sigma}{\xi}\right)}M_Y\left(\frac{k\sigma}{\abs{\xi}},\frac{1}{\abs{\xi}}\right).\nonumber
	\end{eqnarray}
where, $M_Y(\cdot)$ defined in (\ref{mgf_1}).

\item[(ii)] $X^-$ is negative support and $\xi>0$
\begin{eqnarray}
E\abs{X^-}^k=e^{k\left(\mu+\frac{\sigma}{\xi}\right)}M_{Y^{-1}}\left(\frac{k\sigma}{\xi},\frac{1}{\xi}\right).\nonumber
\end{eqnarray}
\end{enumerate}
\end{thm}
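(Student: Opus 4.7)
The plan is to reduce both cases to the Laplace transform $M_Y(\,\cdot\,;\alpha)$ of the standard Weibull via a chain of elementary transformations: a logarithm to turn the PGEV law into an ordinary GEV law, a probability integral transform to turn the GEV into a standard exponential, and finally a power transform to turn that exponential into a Weibull.

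For part (i), since $X^+$ has the log-GEV df $L_{1,\xi}$, the variable $Z = \log X^+ - \mu$ is immediately seen to satisfy $\Pr(Z\le z)=\exp(-(1+(\xi/\sigma)z)_+^{-1/\xi})$, i.e.\ $Z$ is GEV with scale $\sigma$ and shape $\xi$. Hence $E(X^+)^k = e^{k\mu}E(e^{kZ})$, and it suffices to evaluate the Laplace-type transform $E(e^{kZ})$. Apply the transformation $T=(1+(\xi/\sigma)Z)^{-1/\xi}$; solving the inequality $u^{-1/\xi}\le t$ for $u$ shows $T$ is standard exponential and $Z=(\sigma/\xi)(T^{-\xi}-1)$. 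For $\xi<0$, set $\alpha=1/|\xi|=-1/\xi$ and $Y=T^{1/\alpha}$; then $Y$ is standard Weibull of shape $\alpha$ and $T^{-\xi}=T^{1/\alpha}=Y$, so
\begin{equation*}
E(e^{kZ}) \;=\; e^{k\sigma/|\xi|}\,E\!\left(\exp(-(k\sigma/|\xi|)Y)\right) \;=\; e^{k\sigma/|\xi|}\,M_Y(k\sigma/|\xi|\,;\,1/|\xi|),
\end{equation*}
and multiplying by $e^{k\mu}$ and using $\sigma/|\xi|=-\sigma/\xi$ yields the claimed formula.

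For part (ii), $X^-$ is negative and $\xi>0$. The right substitution here is $W=\mu-\log|X^-|$, which simultaneously absorbs the negative support and the minus sign appearing inside the exponent of $L_{2,\xi}$; a direct calculation then shows $W$ is again GEV with scale $\sigma$ and shape $\xi$. Since $|X^-|^k=e^{k\mu}e^{-kW}$, the object of interest is $E(e^{-kW})$. Repeating the same two-step reduction $T=(1+(\xi/\sigma)W)^{-1/\xi}$ (standard exponential), $Y=T^{1/\alpha}$ with $\alpha=1/\xi$, one now has $-\xi<0$ so $T^{-\xi}=T^{-1/\alpha}=Y^{-1}$, and the algebra produces $M_{Y^{-1}}(k\sigma/\xi\,;\,1/\xi)$ in place of $M_Y$, giving $E|X^-|^k=e^{k(\mu+\sigma/\xi)}M_{Y^{-1}}(k\sigma/\xi\,;\,1/\xi)$.

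The main obstacle, and essentially the only one, is sign bookkeeping. The direction of the inequality $u^{-1/\xi}\le t$ in the step that identifies $T$ as standard exponential depends on the sign of $-1/\xi$; likewise, whether $T^{-\xi}$ reduces to $Y$ or to $Y^{-1}$ under $Y=T^{1/\alpha}$ depends on the sign of $\xi$. The interplay between these two signs is what forces $Y$ to appear in part (i) but $Y^{-1}$ in part (ii), and is the reason the reciprocal-Weibull MGF shows up only in the second case. Once the signs are handled carefully, both computations reduce to the Laplace transform of a Weibull that is already recorded in \eqref{mgf_1}.
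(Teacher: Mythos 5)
Your proof is correct and is in substance the same as the paper's: the paper standardizes via $\abs{Z}=\left(\abs{X}e^{-\mu}\right)^{1/\sigma}$ and substitutes $y=1+\xi\,\text{sign}(z)\log\abs{z}$ inside the moment integral, which is precisely your chain $X\mapsto Z$ (GEV) $\mapsto T$ (exponential) $\mapsto Y$ (Weibull) carried out distributionally rather than under the integral sign, with both arguments terminating in the recognition of the Weibull, respectively reciprocal-Weibull, Laplace transform $M_Y$ and $M_{Y^{-1}}$ of (\ref{mgf_1}). Your sign bookkeeping (why $Y$ appears for $\xi<0$ on positive support and $Y^{-1}$ for $\xi>0$ on negative support) reproduces exactly the case split in the paper's equation (\ref{moment}), so nothing is missing.
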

\begin{proof} Suppose $\abs{Z}=\left(\abs{X}e^{-\mu}\right)^{\frac{1}{\sigma}}$ is a standardizing rv with df in (\ref{df3}) for $A=\{z:1+\text{sign}(z)\xi\log \abs{z}>0\}.$ We write
\begin{equation}
E\abs{Z}^k=\int_{A} \abs{z}^{k-1}(1+\text{sign}(Z)\xi\log \abs{z})^{-1-1/\xi}e^{-(1+\text{sign}(Z)\xi\log \abs{z})^{-1/\xi}}dz,
\end{equation}
We have
\begin{equation}\label{moment}
E\abs{Z}^k=\frac{1}{\xi}\int_{A} e^{\frac{k}{\xi}(y-1)\text{sign}(Z)-y^{-\frac{1}{\xi}}}y^{-1-\frac{1}{\xi}}dy,\;\text{ where, } y=1+\xi\,\text{sign}(Z)\log\abs{z}.
\end{equation}
(i) Let $Z^+$ is a rv with positive support. From (\ref{moment}), the $k$th moment does not exist for $\xi>0.$ For $\xi<0,$ we have
\begin{eqnarray}
E(Z^+)^k&=&-\frac{1}{\xi}\int_{0}^{\infty} e^{\frac{k}{\xi}(y-1)-y^{-\frac{1}{\xi}}}y^{-1-\frac{1}{\xi}}dy,\nonumber\\
&=&e^{-\frac{k}{\xi}}M_Y\left(\frac{k}{\abs{\xi}},\frac{1}{\abs{\xi}}\right).\nonumber
\end{eqnarray}
where, $Y$ is a positive rv with standard Weibull distribution and
$M_Y(\cdot)$ defined in (\ref{mgf_1}).
The $k$th moment of $X^+$ can be obtained as
\begin{eqnarray}
E(X^+)^k=e^{k\left(\mu-\frac{\sigma}{\xi}\right)}M_Y\left(\frac{k\sigma}{\abs{\xi}},\frac{1}{\abs{\xi}}\right).\nonumber
\end{eqnarray}
(ii) Similarly, let $Z^-$ is a rv with neagitve support. From (\ref{moment}), the $k$th moment does not exist, for $\xi<0.$ For $\xi>0$ we get
\begin{eqnarray}
E\abs{Z^-}^k&=&\frac{1}{\xi}\int_{0}^{\infty} e^{\frac{k}{\xi}(1-y)-y^{-\frac{1}{\xi}}}y^{-1-\frac{1}{\xi}}dy,\nonumber\\
&=&e^{\frac{k}{\xi}}M_{Y^{-1}}\left(\frac{k}{\xi},\frac{1}{\xi}\right).\nonumber
\end{eqnarray}
The $k$th moment of $X^-$ can be obtained as
\begin{eqnarray}
E\abs{X^-}^k=e^{k\left(\mu+\frac{\sigma}{\xi}\right)}M_{Y^{-1}}\left(\frac{k\sigma}{\xi},\frac{1}{\xi}\right).\nonumber
\end{eqnarray}
\end{proof}
\begin{rem}
The $k$th moment of rvs $X^+$ with PGEV for $\xi>0$ and the $k$th moment of rvs $X^-$ with PGEV $\xi<0$ do not exist.
\end{rem}
The $k$th central moments of $X$ are easily obtained from the ordinary moments by 
\begin{eqnarray}\label{cm}
E(X-E(X))^k=\sum_{j=0}^{k}{{k}\choose{j}}(-1)^j(E(X))^jE(X^{k-j}).
\end{eqnarray}
From (\ref{cm}) and $k=2,$ the variances of PGEV listed in Appendix \ref{tab1}.

\subsection{Entropy.}
An entropy of rv $X$ is a measure of variation of the uncertainty. Shannon entropy is defined by
\begin{eqnarray}\label{Shannon_entropy}
H(X)=-\int_A \log f(x) f(x)dx,
\end{eqnarray}
where, $A=\{x: f(x)>0\}.$ 
 Here, the Shannon entropy of GEV family is well known as
\begin{equation}
H(X)=\log\tilde{\sigma}+(\tilde{\xi}+1)\gamma+1.
\end{equation} 
The Shannon entropy of six type of p-max stable laws are evaluated by Ravi and Saeb (2012). 
Now, we illustrate the Shannon entropy of PGEV family.
\begin{thm}\label{thm_ent}
If $X$ is a rv with df PGEV for $\xi< 0,$ then the Shannon entropy of $X$ is given by
\begin{eqnarray}
H(X)=
\mu+\log\sigma+(1+\xi)\gamma+\frac{\sigma}{\xi}E(\text{sign}(X))\left[\Gamma(1-\xi)-1\right]+1.
\end{eqnarray}
\end{thm}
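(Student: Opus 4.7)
The plan is to express $H(X) = -E[\log f(X)]$ and then exploit the fact that the probability integral transform of the PGEV law reduces the resulting expectations to standard exponential moments.

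First I would differentiate $L_\xi(x;\mu,\sigma)$. Writing $w(x) = 1 + \frac{\xi}{\sigma}\,\text{sign}(x)\log(e^{-\mu}|x|)$, one checks that $\nabla_x w(x) = \xi/(\sigma|x|)$ (the $\text{sign}(x)$ coming from $\log|x|$ cancels the $\text{sign}(x)$ in $w$), so that the density is
\begin{equation*}
f(x) \;=\; \frac{1}{\sigma|x|}\, w(x)^{-1-1/\xi}\, e^{-w(x)^{-1/\xi}}.
\end{equation*}
Taking logarithms and solving $w(x)=1+\frac{\xi}{\sigma}\text{sign}(x)\log(e^{-\mu}|x|)$ for $\log|x|$ yields $\log|x| = \mu + \frac{\sigma}{\xi}\text{sign}(x)(w(x)-1)$, whence
\begin{equation*}
-\log f(x) \;=\; \log\sigma + \mu + \tfrac{\sigma}{\xi}\text{sign}(x)(w(x)-1) + (1+1/\xi)\log w(x) + w(x)^{-1/\xi}.
\end{equation*}

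The next step is to take expectation. The key observation is the probability integral transform: since $L_\xi(X) = \exp(-w(X)^{-1/\xi})$ is uniform on $(0,1)$, the random variable $T := w(X)^{-1/\xi}$ has the standard exponential distribution. Consequently $w(X) = T^{-\xi}$, $\log w(X) = -\xi\log T$, and standard exponential moments give
\begin{equation*}
E[T] = 1,\qquad E[\log T] = -\gamma,\qquad E[T^{-\xi}] = \Gamma(1-\xi),
\end{equation*}
the last being valid precisely because $\xi<0$ (so $1-\xi>0$). Hence $E[w(X)^{-1/\xi}]=1$, $E[\log w(X)]=\xi\gamma$, and $E[w(X)-1] = \Gamma(1-\xi)-1$.

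It remains to handle the sign term $E[\text{sign}(X)(w(X)-1)]$. Here I would use that, under PGEV with $\xi<0$, the support of $X$ is contained entirely in $(0,\infty)$ or entirely in $(-\infty,0)$ (inspecting (\ref{df3}) and the support constraint $w_+>0$), so $\text{sign}(X)$ is almost surely a constant and factors out: $E[\text{sign}(X)(w(X)-1)] = E[\text{sign}(X)](\Gamma(1-\xi)-1)$. Substituting everything into the expression for $H(X)=-E[\log f(X)]$ and combining $(1+1/\xi)\xi\gamma = (\xi+1)\gamma$ gives the stated formula. I expect the main obstacle to be simply the careful bookkeeping of the $\text{sign}(x)$ factors when computing $\nabla_x w$ and $\log|x|$, and recognizing that these signs conspire so that the expectation of the middle term carries exactly one factor of $\text{sign}(X)$.
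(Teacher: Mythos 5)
Your proof is correct and follows essentially the same route as the paper: both hinge on the observation that $T=w(X)^{-1/\xi}=\left(1+\text{sign}(X)\,\xi\log(e^{-\mu}\abs{X})\right)^{-1/\xi}$ is standard exponential, together with the moments $E[T]=1$, $E[\log T]=-\gamma$, $E[T^{-\xi}]=\Gamma(1-\xi)$ (finite since $\xi<0$) and the fact that $\text{sign}(X)$ is almost surely constant so it factors out of the middle term. The only difference is organizational: the paper first computes $H(Z)$ for the standardized variable $Z$ and then passes to $X=\abs{Z}^{\sigma}e^{\mu}$ via the entropy change-of-variable lemma of Ravi and Saeb (2012), whereas you differentiate the full three-parameter density and compute $-E[\log f(X)]$ directly, which makes the argument self-contained but is otherwise the same calculation.
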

\begin{proof}
Let $Z$ is a standardized rv with df PGEV ($\xi<0$) in (\ref{pdf_1}), the Shannon entropy is given by
\begin{eqnarray}
H(Z)&=&E(\log \abs{Z})+E\left(\log (1+\text{sign}(Z)\xi\log \abs{Z})^{1+1/\xi}+(1+\text{sign}(Z)\xi\log \abs{Z})^{-1/\xi}\right),\nonumber\\
&=&E_1+E_2.\label{ZEnt}
\end{eqnarray}
Putting $Y=(1+\text{sign}(Z)\xi\log \abs{Z})^{-1/\xi},$ and $Y$ has standard exponential distribution. 

\begin{eqnarray}
E_1=\xi^{-1}E(\text{sign}(Z))E(Y^{-\xi}-1))
=\frac{1}{\xi}E(\text{sign}(Z))\left[\Gamma(1-\xi)-1\right].\label{E1}
\end{eqnarray}
Next, 
\begin{eqnarray}
E_2=-(1+\xi)E_Y(\log(Y))+E_Y(Y)
=(1+\xi)\gamma+1,\label{E2}
\end{eqnarray}
From (\ref{E1}), (\ref{E2}) we get
\begin{eqnarray}
H(Z)=
(1+\xi)\gamma+\frac{1}{\xi}E(\text{sign}(Z))\left[\Gamma(1-\xi)-1\right]+1.
\end{eqnarray}
From lemma 1.3, Ravi and Saeb (2012), If $X=\abs{Z}^{\sigma} e^{\mu}$ then
\begin{eqnarray}
H(X)&=&\mu+\log\sigma+\left(\sigma-1\right)E\log \abs{Z}+H(Z), \nonumber\\
&=&\mu+\log\sigma+(1+\xi)\gamma+\frac{\sigma}{\xi}E(\text{sign}(X))\left[\Gamma(1-\xi)-1\right]+1.\nonumber
\end{eqnarray}
\end{proof}

\begin{rem}\label{rem2}
Note that, the Shannon entropy of the PGEV distribution for $\xi>0$ does not exist.
\end{rem}
 Suppose $Y$ is a rv with df $F_Y$ and $X=h(Y)$ with df $F_X$ where $h$ is a continuous function. The entropy ordering $H(Y)<H(X),$ will be denoted as $F_Y\stackrel{E}{<}F_X$ or $Y\stackrel{E}{<}X.$ In general case, the following lemma finds a direct relationship for entropy.
\begin{lem}\label{lem2}
 If $E_X\left(\log\abs{\nabla_X h^{\leftarrow}(X)}\right)<0$ then $Y\stackrel{E}{<}X.$
\end{lem}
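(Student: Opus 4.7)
The plan is to turn the entropy comparison into an expected-value statement via the standard change-of-variables formula for densities. Assuming implicitly (as the appearance of $\nabla_X h^{\leftarrow}(X)$ in the hypothesis already suggests) that $h$ is strictly monotone and differentiable off a measure-zero set, the densities of $Y$ and $X=h(Y)$ are related by the Jacobian identity $f_X(x)=f_Y(h^{\leftarrow}(x))\,\abs{\nabla_x h^{\leftarrow}(x)}$.

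Taking $\log$, multiplying by $-f_X$, and integrating, I would substitute into the definition (\ref{Shannon_entropy}) of Shannon entropy to obtain
\[ H(X)=-E_X\bigl[\log f_Y(h^{\leftarrow}(X))\bigr]-E_X\bigl[\log\abs{\nabla_X h^{\leftarrow}(X)}\bigr]. \]
The critical step is to recognise that because $X=h(Y)$, the random variable $h^{\leftarrow}(X)$ is equal in distribution to $Y$, so the first expectation on the right is precisely $-E_Y[\log f_Y(Y)]=H(Y)$. This gives the clean decomposition
\[ H(X)-H(Y)=-E_X\bigl[\log\abs{\nabla_X h^{\leftarrow}(X)}\bigr]. \]

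The conclusion then reads off directly: the hypothesis $E_X[\log\abs{\nabla_X h^{\leftarrow}(X)}]<0$ makes the right-hand side strictly positive, so $H(X)>H(Y)$, i.e.\ $Y\stackrel{E}{<}X$. There is no serious obstacle; the only minor subtlety is justifying the Jacobian identity on the support of $X$, which requires $h$ to be piecewise $C^1$ and strictly monotone. Since the lemma will only be invoked on smooth monotone transformations (typically powers and exponentials, as elsewhere in the paper), this holds automatically and need not be belaboured in the proof.
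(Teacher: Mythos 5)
Your proposal is correct and follows essentially the same route as the paper: both derive the change-of-variables identity $f_X(x)=f_Y(h^{\leftarrow}(x))\abs{\nabla_x h^{\leftarrow}(x)}$, split $H(X)$ into $H(Y)$ minus $E_X\left(\log\abs{\nabla_X h^{\leftarrow}(X)}\right)$ (the paper does this via the substitution $d(h^{\leftarrow}(x))$ in the first integral, which is the same step as your observation that $h^{\leftarrow}(X)$ has the law of $Y$), and read off the conclusion from the sign of the hypothesis. Your explicit remark on the monotonicity and smoothness needed for the Jacobian identity is a minor tightening the paper leaves implicit.
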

\begin{proof}
We write,
\[F_X(x)=\Pr(h(Y)\leq x)=\Pr(Y\leq h^{\leftarrow}(x))=F_Y(h^{\leftarrow}(x)),\]
with respective density function
\[f_X(x)=f_Y(h^{\leftarrow}(x))\abs{\nabla_x(h^{\leftarrow}(x))}.\]
From definition of entropy we have
\begin{eqnarray}\label{lem1}
H(X)&=&-\int_\Real f_Y(h^{\leftarrow}(x))\log\left(f_Y(h^{\leftarrow}(x)\right) d(h^{\leftarrow}(x))-\int_{\Real}f_X(x)\log\abs{\nabla_x h^{\leftarrow}(x)}dx,\nonumber\\
&=&H(Y)-E_X(\log\abs{\nabla_X(h^{\leftarrow}(X))}).\label{entropy}
\end{eqnarray}
Noting that, if $E_X\left(\log\abs{\nabla_X h^{\leftarrow}(X)}\right)<0$ then $Y\stackrel{E}{<}X.$
\end{proof}
The following theorem investigates the entropy ordering in GEV families with $\xi<0.$
\begin{thm}\label{entropy_ord} Suppose $Y$ has GEV family.
If $X=\text{sign(X)}\exp(\abs{Y})$ is a rv with PGEV $(\xi<0)$ then $Y\stackrel{E}{<}X.$ 
\end{thm}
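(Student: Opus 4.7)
The plan is to apply Lemma \ref{lem2} directly with the transformation $h(y)=\mathrm{sign}(y)\exp(|y|)$, so that $X=h(Y)$. The key observation is that the Lemma reduces the ordering $Y\stackrel{E}{<}X$ to verifying a single inequality on an expectation of $\log|\nabla_X h^{\leftarrow}(X)|$, which in this case collapses to a transparent sign statement.

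First I would compute the inverse and its derivative. Since $h$ sends positive reals to $(1,\infty)$ and negative reals to $(-\infty,-1)$, it is a bijection from $\mathbb{R}$ onto $\mathbb{R}\setminus(-1,1)$ with $h^{\leftarrow}(x)=\mathrm{sign}(x)\log|x|$. Differentiating on each branch gives $|\nabla_x h^{\leftarrow}(x)|=1/|x|$, and therefore
\begin{equation*}
\log\bigl|\nabla_x h^{\leftarrow}(x)\bigr|=-\log|x|.
\end{equation*}
Taking expectation under $X$, we have $E_X\bigl(\log|\nabla_X h^{\leftarrow}(X)|\bigr)=-E_X(\log|X|)$.

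Next I would invoke the relation $|X|=\exp(|Y|)\geq 1$, which is forced by the form of $h$. Consequently $\log|X|=|Y|\geq 0$ almost surely, and since $Y$ has an absolutely continuous GEV law the event $\{Y=0\}$ has probability zero, so $E_X(\log|X|)=E(|Y|)>0$. Therefore $E_X(\log|\nabla_X h^{\leftarrow}(X)|)<0$, and Lemma \ref{lem2} yields $Y\stackrel{E}{<}X$.

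The only subtlety to flag carefully is the measurability/bijectivity of $h$: because $h$ is piecewise monotone on $\{y>0\}$ and $\{y<0\}$, one should either split the change-of-variables calculation in Lemma \ref{lem2} over these two branches, or remark that $h$ is a bijection onto its range and handles $y=0$ as a null event. I do not expect a substantial obstacle; the essential content is that $|X|\geq 1$ forces the Jacobian term to be favorable, which is precisely why the power normalization enlarges the entropy relative to linear normalization. As a consistency check, one could plug in the explicit formulas $H(Y)=\log\tilde{\sigma}+(\tilde{\xi}+1)\gamma+1$ and the expression from Theorem \ref{thm_ent} for $H(X)$ to verify $H(X)-H(Y)=E_X(\log|X|)$ matches $E(|Y|)$, giving independent confirmation of the ordering.
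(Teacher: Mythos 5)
Your reduction via Lemma \ref{lem2} --- computing $\abs{\nabla_x h^{\leftarrow}(x)}=1/\abs{x}$ so that the ordering becomes equivalent to $E_X(\log\abs{X})>0$ --- is exactly the paper's first step. The gap is in how you verify that inequality. You read the theorem's (admittedly garbled) formula $X=\mathrm{sign}(X)\exp(\abs{Y})$ literally, took $h(y)=\mathrm{sign}(y)e^{\abs{y}}$ as a bijection onto $\Real\setminus(-1,1)$, and concluded $\abs{X}=e^{\abs{Y}}\geq 1$ almost surely, hence $E_X(\log\abs{X})=E\abs{Y}>0$. But this map does not carry the GEV law to the PGEV law, and the conclusion $\abs{X}\geq 1$ contradicts the support of PGEV: for $\xi<0$ the positive-support PGEV is concentrated on $(0,e^{\mu+\sigma/\abs{\xi}})$ and puts mass $L_{\xi}(1;\mu,\sigma)>0$ on $(0,1)$. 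Indeed, under your $h$ the df of $h(Y)$ is constant (equal to $G(0)$) on all of $(-1,1)$, so $h(Y)$ is not PGEV-distributed at all. The correct transformations, which the paper's proof uses, are the two monotone branches $h(y)=e^{y}$ (positive case, $X^{+}=e^{Y}$) and $h(y)=-e^{-y}$ (negative case, $X^{-}=-e^{-Y}$); under these, $\log\abs{X}=\pm Y$, not $\abs{Y}$.

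Consequently the quantity to be shown positive is $E_X(\log\abs{X})=\mu+\frac{\sigma}{\xi}E(\mathrm{sign}(X))\left[\Gamma(1-\xi)-1\right]$, which the paper extracts from the computation inside Theorem \ref{thm_ent}; it equals $E(Y)$ on the positive branch and $-E(Y)$ on the negative branch (the GEV mean up to sign), and it is \emph{not} automatically nonnegative --- take $\mu$ sufficiently negative and it changes sign, so no pointwise sign argument of the kind you propose can close the proof. Your own consistency check exposes this: $H(X)-H(Y)=E_X(\log\abs{X})$ does not match $E(\abs{Y})$ but rather $\pm E(Y)$. To repair the argument you must (i) split into the two branches $e^{y}$ and $-e^{-y}$ as the paper does, and (ii) establish positivity of $\mu+\frac{\sigma}{\xi}E(\mathrm{sign}(X))\left[\Gamma(1-\xi)-1\right]$ from the explicit moment formula (a step the paper itself treats tersely, and which implicitly constrains the parameters), rather than from the bound $\log\abs{X}\geq 0$, which fails on the event $\{\abs{X}<1\}$ of positive probability.
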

\begin{proof}
(i) Let $X^{+}$ is a positive rv and $h(x)=\exp(x).$ From Lemma \ref{lem2}, it is enough to show that, $E_X(\log(X))>0.$ Use the proof of Theorem \ref{thm_ent}, we have
$E_X(\log(X))=\mu+\frac{\sigma}{\xi}(\Gamma(1+\abs{\xi})-1).$ Since, the Shannon entropy of PGEV for $\xi<0$ exists, $\Gamma(1+\abs{\xi})>0$ for all $\xi<0,$ and $Y\stackrel{E}{<}X^{+}$ holds.

(ii) Similarly, define $h(x)=-\exp(-x)$ and $X^{-}$ is a negative rv. From Lemma \ref{lem2} and Theorem \ref{thm_ent}, $E_X\left(\log\abs{X}\right)=\mu-\frac{\sigma}{\xi}(\Gamma(1-\xi)-1)>0$ for all $\xi<0$ and hence the proof.
\end{proof}
  
\subsection{Dispersion ordering.} 
Lewis and Thompson (1981) have defined the concept of \textquotedblleft ordering in dispersion\textquotedblright. Two distributions $F_X$ and $F_Y$ are said to be ordered in dispersion, denoted by $F_Y\stackrel{disp}{<}F_X$ if and only if
\[F_Y^{\leftarrow}(u)-F_Y^{\leftarrow}(v)\leq F_X^{\leftarrow}(u)-F_X^{\leftarrow}(v),\;\;\text{ for all }0<v<u<1.\]
It is easily seen by putting $u=F_Y(y)$ and $v=F_Y(x)$ where $y\leq x$ that $F_Y\stackrel{disp}{<}F_X$ if and only if
\begin{eqnarray}\label{dis_1}
F_X^{\leftarrow}(F_Y(x))+x\;\text{ is nondecreasing in }x,
\end{eqnarray}
then, $F_Y$ is said to be tail-ordered with respect to $F_X$ $(F_Y \stackrel{tail}{<} F_X).$ Thus we see that dispersive ordering is the same as tail-ordering. Oja (1981) shows that the dispersion ordering implies both variance ordering and entropy ordering $(\stackrel{EV}{<}).$ In other word, $F_Y\stackrel{disp}{<}F_X$ is a sufficient condition for $Y\stackrel{EV}{<}X$ (variance and entropy order similarly). Entropy ordering of distributions within many parametric families are studied in Ebrahimi et al. (1999).

Let  $L(x_p)=1-p,$ where, $L(\cdot)$ is the distribution (\ref{df3}) so that
\begin{eqnarray}\label{quantile_e1}
x_p=\text{sign}(X)\exp\left(\frac{\sigma}{\xi}\text{sign}(X)\left(y_p^{-\xi}-1\right)+\mu\right);
\end{eqnarray}
We also well known the quantile for $\xi=0$ in (\ref{gev}) we get
\begin{eqnarray}
x_p=\frac{\tilde{\sigma}}{\tilde{\xi}}(y_p^{-\tilde{\xi}}-1)+\tilde{\mu},\label{quantile_e2}
\end{eqnarray}
where $y_p =-\log(1-p).$
 The following corollary investigates the dispersion ordering in the GEV families. 
\begin{coll} Suppose $X$ and $Y$ are rvs to correspond PGEV and GEV families. Let $X^+$ is a positive support, from (\ref{dis_1}) and (\ref{quantile_e1}) we have
\begin{eqnarray}\label{dispresion}
L^{\leftarrow}(G(x))=\exp(x)+x,\nonumber
\end{eqnarray}
is a nondecreasing function for all $x$ in support of GEV, then, $Y\stackrel{disp}{<}X^+.$ On the other hand, the result from Oja (1981) and Theorem \ref{entropy_ord}, the entropy of GEV and PGEV families are ordered for $\xi<0,$ we conclude that the variances are also ordered in $\xi,$
so, $Y\stackrel{EV}{<}X^+$ for $\xi<0.$
Similarly, if $X^-$ is a rv with negative support, from Theorem \ref{entropy_ord}, $Y\stackrel{EV}{<}X^-$ for $\xi<0$ and hence the proof.
\end{coll}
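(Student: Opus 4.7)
The plan is to establish the dispersion ordering $Y \stackrel{disp}{<} X^+$ directly from the quantile formulas (\ref{quantile_e1}) and (\ref{quantile_e2}), then invoke Oja's (1981) result that dispersion implies variance ordering, and combine this with the entropy ordering already supplied by Theorem \ref{entropy_ord} to conclude $Y \stackrel{EV}{<} X^+$. The argument for $X^-$ will be the sign-flipped analogue of the same computation.

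First I would align the PGEV parameters with those of $Y$ by setting $\mu = \tilde\mu$, $\sigma = \tilde\sigma$ and $\xi = \tilde\xi < 0$. On the positive branch $\text{sign}(X)=1$, so (\ref{quantile_e1}) reduces to $L^{\leftarrow}(1-p) = \exp\bigl(\tfrac{\sigma}{\xi}(y_p^{-\xi}-1)+\mu\bigr)$, which is precisely the exponential of the GEV quantile in (\ref{quantile_e2}). Substituting $u = G(x)$ then gives $L^{\leftarrow}(G(x)) = \exp(x)$, so the monotonicity expression in criterion (\ref{dis_1}) becomes $\exp(x)+x$, whose derivative $\exp(x)+1$ is strictly positive. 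Thus (\ref{dis_1}) is satisfied and $Y \stackrel{disp}{<} X^+$ follows.

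Next I would invoke Oja (1981) to pass from dispersion to variance ordering, obtaining $\text{Var}(Y) \leq \text{Var}(X^+)$. Since Theorem \ref{entropy_ord} already supplies $H(Y) < H(X^+)$ for $\xi<0$ (and Remark \ref{rem2} reminds us that PGEV entropy is only defined on this range), the two together give $Y \stackrel{EV}{<} X^+$. The treatment of $X^-$ proceeds in parallel: Theorem \ref{entropy_ord} yields the entropy ordering directly, and substituting $\text{sign}(X) = -1$ in (\ref{quantile_e1}) leads, after the same algebra, to an expression of the shape $-c\exp(-x) + x$ with $c>0$, whose derivative $c\exp(-x)+1$ is again positive, so the dispersion (and hence variance) ordering persists and the EV conclusion is recovered for $X^-$.

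The main obstacle I anticipate is the implicit parameter alignment between $L$ and $G$: the corollary does not spell out which PGEV parameters are matched to the GEV ones, but the clean simplification $L^{\leftarrow}(G(x)) = \exp(\pm x)$ (up to a constant factor) really does require $(\mu,\sigma,\xi) = (\tilde\mu,\tilde\sigma,\tilde\xi)$, and one would ideally verify that any admissible matching still preserves the monotonicity of $L^{\leftarrow}(G(x))+x$. A secondary technical point is that this function lives on the support of $Y$, which for $\tilde\xi<0$ is bounded above; one must check that (\ref{dis_1}) remains valid over the restricted range where $G$ takes values strictly in $(0,1)$ rather than merely being applied on all of $\Real$.
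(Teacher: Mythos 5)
Your proposal is correct and follows essentially the same route as the paper: verify criterion (\ref{dis_1}) via the quantile formulas (\ref{quantile_e1}) and (\ref{quantile_e2}) to obtain $L^{\leftarrow}(G(x))+x=\exp(x)+x$ nondecreasing, then invoke Oja (1981) together with Theorem \ref{entropy_ord} to conclude $Y\stackrel{EV}{<}X^{\pm}$ for $\xi<0$. If anything, you are more careful than the paper, which leaves the parameter alignment $(\mu,\sigma,\xi)=(\tilde\mu,\tilde\sigma,\tilde\xi)$ implicit and, for $X^-$, cites only the entropy ordering, whereas you also supply the explicit dispersion check $-c\exp(-x)+x$ with $c>0$ and flag the restriction to the (bounded) support of $Y$.
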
  

\section{Methods of Estimations}
\subsection{Maximum Likelihood Estimation.} The method of maximum likelihood estimation (MLE) using Newton-Raphson iteration to maximize the likelihood function of GEV, as recommended by Prescott and Walden (1980).
The log-likelihood function  for $(\mu,\sigma,\xi)$ based on PGEV family, given by
\begin{eqnarray}\label{mle2}
\ell(x;\mu,\sigma,\xi)&=&-k\log \sigma-\sum_{i=1}^{k}\log \abs{x_i}-\left(1+\frac{1}{\xi}\right)\sum_{i=1}^{k}\log\left(1+\xi\,\text{sign}(x)\left(\frac{\log \abs{x_i}-\mu}{\sigma}\right)\right)\nonumber\\
&&-\sum_{i=1}^{k}\left(1+\xi\,\text{sign}(x)\left(\frac{\log\abs{ x_i}-\mu}{\sigma}\right)\right)^{-1/\xi};
\end{eqnarray}
For determining the MLEs of the parameters $\mu,$ $\sigma$ and $\xi,$ we can use the same procedure as for the GEV law. Since, there is no analytical solution, but for any given dataset the maximization is straightforward using standard numerical optimization algorithms. Jenkinson (1969), Prescott and Walden (1980) show that the elements of the Fisher information matrix for GEV distribution$(\xi\neq 0).$
Since the $\log \abs{x}$ is free from of parameters, 
the Fisher information matrix for PGEV is similar the Fisher information matrix for GEV law. Since, the Shannon entropy is equivalent to the negative log-likelihood function and from remark \ref{rem2} the MLEs exists for $\xi<1.$
Smith (1985) has investigated the classical regularity conditions for the asymptotic properties of MLEs are not satisfied but he shows that, when $\xi>-0.5$ the MLEs have usual asymptotic properties. For $\xi=-0.5$ the MLEs are asymptotically efficient and normally distributed, but with a different rate of convergence. We remark that results of Smith applies also to the three parameters. The MLEs may nonregular for $\xi< -0.5$ and $\xi\geq 1$, but Bayesian techniques offer an alternative that is often preferable. 
\subsection{Bayesian Estimation.} Let $\bm{\theta}$ is a vector of the model parameters in a space  $\Theta$ and $\pi(\bm{\theta})$ denote the density of the prior distribution for $\bm{\theta}.$ The posterior density of $\bm{\theta}$ is given by
\begin{eqnarray}
\pi(\bm{\theta}|\bm{x})=\frac{\pi(\bm{\theta})\exp(\ell (\bm{x};\bm{\theta}))}{\int_{\Theta} \pi(\bm{\theta})\exp(\ell (\bm{x};\bm{\theta}))d\bm{\theta}}\propto \pi(\bm{\theta})\exp(\ell (\bm{x};\bm{\theta})).\nonumber
\end{eqnarray}
where, $\ell(\cdot)$ is log-likelihood function. However, computing posterior inference directly is difficult. To bypass this problem we can use simulation bases techniques such as Markov Chain Monte Carlo (MCMC). 
The Markov Chain is generated using standard Metropolis (Hastings, 1970) within Gibbs (Geman and Geman, 1984) methods. Setting $\bm{\theta}=(\mu,\eta, \xi)$ where, $\eta=\log\sigma$ is easier to work. 
We might choose a prior density function
\[\pi(\bm{\theta})=\pi_\mu(\mu)\pi_\eta(\eta)\pi_\xi(\xi),\]
where the marginal priors, $\pi_\mu(\cdot), \pi_\eta(\cdot)$ and $\pi_\xi(\cdot),$ are normal density function with mean zero and variances, $v_\mu,\;v_\eta$ and $v_\xi$ respectively.
These are independent normal priors with large variances. The variances are chosen large enough to make the distributions almost flat and therefore should correspond to prior ignorance. The choice of normal densities is arbitrary. The proposed value $\bm{\theta}^*$ at point $i$ is $\bm{\theta}^* = \bm{\theta}^{(i)} + \bm{\epsilon}.$ The
$\bm{\epsilon}=(\epsilon_\mu, \epsilon_\eta, \epsilon_\xi)$ are normally distributed variables, with zero means and variances $w_\mu,$ $w_\eta$ and $w_\xi$ respectively. 

Now we specify an arbitrary probability rule $q(\bm{\theta}_{i+1}|\bm{\theta}_i)$ for iterative simulation of successive values. The distribution $q$ is called the proposal distribution. Possibilities include $(\bm{\theta}_{i+1}| \bm{\theta}_i)$ is Normal density with mean $\bm{\theta}_i$ and variance one. Then $q(\bm{\theta}_i| \bm{\theta}^*) =\tilde{f}(\bm{\theta}^*-\bm{\theta}_i),$ where $\tilde{f}(\cdot)$ is
the density function of $\bm{\epsilon}.$ Since the distribution of $\bm{\epsilon}$ is symmetric about zero $q(\bm{\theta}_i|\bm{\theta^*})=q(\bm{\theta^*}|\bm{\theta}_i).$
The acceptance probability
\begin{eqnarray}\label{Omega}
\Omega_i=\min\left\{1,\frac{\exp(\ell(x;\bm{\theta}^*))\pi(\bm{\theta}^*)}{\exp(\ell(x;\bm{\theta}_i))\pi(\bm{\theta}_i)}\right\},
\end{eqnarray}
was suggested by Hastings (1970). Here we accepted the proposed value $\bm{\theta}^*$ with probability $\Omega_i.$
We note that, the variance of $\bm{\epsilon}$ affects the acceptance
probability, if the variance is too low most proposals will be accepted, resulting in very slow convergence, and if it is too high very few will be accepted and the moves in the chain will often be large. Appendix \ref{Ape1} gives the details of the required algorithm.

Here we find few papers linking the Bayesian method and extreme value analysis. Smith and Naylor (1987) who compare Bayesian and maximum likelihood estimators for the Weibull distribution. Coles and Powell (1996) and Coles and Tawn (1996) for a
detailed review of Bayesian methods in extreme value modelling. 
Stephenson and Tawn (2004) perform inference using reversible jump MCMC techniques for extremal types.
\subsection{Prediction.} We are interested
in the outcome $y$ of the future experiment.
Within the Bayesian framework, the predictive distribution  function is argued by Aitchison and Dunsmore (1975). In particular, since
the objective of an extreme value analysis is usually an estimate of the
probability of future events reaching extreme levels, expression through
predictive distributions is natural. Let $Y$ is a rv with annual maximum distribution over a
future period of years and $\bm{x}$  represents historical observations. The predictive distribution  function is defined as
\begin{eqnarray}\label{pred}
\Pr(Y< y|\bm{x})&=&\int_\Theta \Pr(Y<y|\bm{\theta})\pi(\bm{\theta}|\bm{x})d\bm{\theta},\nonumber\\
&\simeq&\frac{1}{n}\sum_{i=1}^{n}\Pr(Y < y|\bm{\theta}_i),\nonumber
\end{eqnarray}
where $\bm{\theta}_i$ is the output from the $i^{th}$ iteration of a sample of size $n$ from the Gibbs
sampler of posterior distribution of $\bm{\theta}.$ Estimates of extreme quantiles
of the annual maximum distribution are then obtained by solving the equation
\begin{eqnarray}\label{return}
\frac{1}{n}\sum_{i=1}^{n}\Pr(Y < x_p|\bm{\theta}_i)=1-p,
\end{eqnarray}
for $x_p$ with various values of $m$ where, $m=1/p$ is defined as return period.
\subsection{Real Data Analysis.}
In this section we shall use the PGEV model to a real data set. This analysis is based on the annual maximum yearly rainfall data of station Eudunda, Australia (Latitude 34.18S; Longitude 139.08E and
Elevation 415 m) which collected during 1881-2015. Annual maxima, corresponding to the year from 1881, were found from the 135 years worth of data and are plotted in Fig \ref{Data}. 
We assume that the pattern of variation has stayed constant over the observation period, so we model the data as independent observations from the GEV families. 
Here, maximization of GEV and PGEV log-likelihood functions using the "Nelder-Mead" algorithms. All the computations were done using R programming language.

In what follows we shall apply formal goodness of fit tests in order to verify which distribution fits better to these data. We apply the Cram\'{e}r-von Mises ($C$) and Anderson-Darling ($A$) test statistics. The test statistics $C$ and $A$ are described in detail in Chen and Balakrishnan (1995). In general, the smaller values of statistics $C$ and $A$, the better fit to data. Additionally, from the critical values of statistics $C$ and $A$ given in Chen and Balakrishnan (1995), it is possible to calculate the p-values corresponding to each test statistics. The null hypothesis is $H_0:\{X_1,\ldots,X_n\}$ comes from GEV/PGEV families. To test $H_0,$ we can proceed as appendix \ref{Ape}.
The values of statistics $C$ and $A$ (p-values between parentheses) for all models are given in Table \ref{Table1}.
From this table we conclude that does not evidence to reject the null hypothesis for GEV/PGEV distributions.
Table \ref{Table2} lists the MLE method of the parameters estimation and standard errors in parentheses. Since the values of standard errors in PGEV model are lower than other laws, we suggesting that the PGEV model is best fit model for these data. 
Within the Bayesian model with non-informative prior distributions, the algorithm in \ref{Ape1} was applied to annual maxima dataset. Initializing the MCMC algorithm with maximum likelihood estimates as our initial vector, $\theta_0 = (4.3614, 02853, -02386)$ should produce a chain with small burn-in period. After some pilot runs, a Markov chain of $1000$
iterations was then generated with good mixing properties (Figure \ref{graph_bayes}). The burn-in period was taken to be the first $400$ iterations which the stochastic variations in the chain seem reasonably homogeneous. If we accept this, after deleting the
first $600$ simulations, the remaining $400$ simulated values can be treated as
dependent realizations whose marginal distribution is the target posterior. The sample means (and standard error) of each marginal component
of the chain are
\[\hat{\mu}=4.3615\;(0.0265)\;\;\hat{\sigma}= 0.2848\;(0.0144)\;\;\hat{\xi}= -0.2411\;(0.0340).\]
Finally,  using eq.  \ref{return}, a plot of the predictive distribution of a future annual maximum is shown in Fig. \ref{return_graph} on the usual return period scale. Table.\ref{Table3} shows the predictive return levels $x_p$ for $m$ years where, $m=\frac{1}{p}$ is return period. For example, the corresponding estimate for the 4 years return level is $x_{0.75}= 106.59.$


\appendix
\section{}\label{pmax}
The p-max stable laws, namely,
\begin{equation*}
\begin{array}{ll}
\text{the log-Fr\'{e}chet law:}  &
 K_{1,\alpha}(x) =  \left\lbrace	
           \begin{array}{l l}
			0, & x<1,\\
			\exp(-(\log x)^{-\alpha}), & 1\leq x;
			\end{array}
			\right.\\ 
\\
\text{the log-Weibull law:} &
 K_{2,\alpha}(x)=  \left\lbrace
		\begin{array}{l l}
			0, & x<0, \\
			\exp(-(-\log x)^{\alpha}), & 0\leq x<1, \\
			1, & 1\leq x;
			\end{array}\right.\\ 
\\
\text{the standard Fr\'{e}chet law: } &
K_{3}(x)  =  \Phi_1(x), x \in \Real; 
\\
\text{the negative log-Fr\'{e}chet law: }&	
K_{4,\alpha}(x)  = \left\lbrace
		\begin{array}{l l}
			0, & x<-1, \\
			\exp(-(-\log (-x))^{-\alpha}), & -1\leq x<0,\\
			1, & 0\leq x;
			\end{array}\right. \\
\\
\text{the negative log-Weibull law: } &	K_{5,\alpha}(x)  =  \left\lbrace
		\begin{array}{l l}
		\exp(-(\log (-x))^{\alpha}) & x<-1,\\
			1, & -1\leq x; 
			\end{array}\right.\\
\\
\text{the standard Weibull law:}& K_6(x) = \Psi_1(x), x \in \Real;
\end{array}
\end{equation*}
where, $\alpha>0$ being a parameter.

\section{}\label{fig}
The density function of (\ref{df3}), respectively, given by
\begin{eqnarray}\label{pdf_1}
l_{\xi}(x;\mu,\sigma)&=&
\frac{1}{\sigma\abs{x}}\exp\left(-\left(1+\frac{\xi}{\sigma}\,\text{sign}(x)\left(\log\abs{x}-\mu\right)\right)_+^{-1/\xi}\right)\nonumber\\
&&\left(1+\frac{\xi}{\sigma}\,\text{sign}(x)(\log \abs{x}-\mu)\right)_+^{-1-1/\xi};
\end{eqnarray}
And from (\ref{gev}), density function of PGEV distribution with $\xi=0$ is well known as
\begin{eqnarray}
l_{\tilde{\xi}}(x;\tilde{\sigma})=\frac{1}{\tilde{\sigma}}\exp\left(-\left(1+\frac{\tilde{\xi}(x-\tilde{\mu})}{\tilde{\sigma}}\right)^{-1/\tilde{\xi}}\right)\left(1+\frac{\tilde{\xi}(x-\tilde{\mu})}{\tilde{\sigma}}\right)^{-1-1/\tilde{\xi}}.\label{pdf2}
\end{eqnarray}

A quantile estimator and variance of $x_p$ are defined by substituting estimators $\mu,$ $\sigma$ and $\xi$ for the parameters in (\ref{quantile_e1}) and (\ref{quantile_e2}). Note that $x_p$ is a function of $\mu, \sigma$ and $\xi$ and it is a rv.
The variance of $x_p$ is given by the delta method,
\begin{eqnarray}\label{v_MLE}
Var(x_p)=\nabla_{\bm\theta} x_p^{T}\bm\Sigma\nabla_{\bm\theta} x_p,
\end{eqnarray}
where, 
$\bm\Sigma$ is variance covariance matrix 
and $\nabla_{\bm\theta}x_p^T$ for $\bm{\theta}=[\mu,\sigma,\xi]$ is calculating by
\begin{eqnarray}
\nabla_{\bm\theta} x_p^{T}=	\abs{x_p}\left[1,\xi^{-1}(y_p)^{-\xi}-1),\frac{\sigma}{\xi^{2}}\left[(y_p)^{-\xi}\log(y_p)^{-\xi}-\left((y_p)^{-\xi}-1\right)\right]\right].\nonumber
\end{eqnarray}
where, $\xi\neq 0.$ For $\xi= 0$  (\ref{quantile_e2}) is still valid for $\bm{\tilde{\theta}}=[\tilde{\mu},\tilde\sigma,\tilde\xi]$ with
\begin{eqnarray}
\nabla_{\bm{\tilde{\theta}}} x_p^{T}=\left[1,-\tilde{\xi}^{-1}(1-(y_p)^{-\tilde{\xi}}),\tilde{\sigma}\tilde{\xi}^{-2}(1-(y_p)^{-\tilde{\xi}})-\frac{\tilde{\sigma}}{\tilde{\xi}}(y_p)^{-\tilde{\xi}}\log(y_p)\right].\nonumber
\end{eqnarray}
where $y_p =-\log(1-p).$ Approximate confidence intervals (CI) can also be obtained by the delta method. The delta method enable the approximate normality of $\hat{x}_p$ to be used to obtain CI for $x_p.$ It follows that an approximate $(1-\alpha)$ CI for $x_p$ is $\hat{x}_p\pm z_{\alpha/2}\sqrt{Var(\hat{x}_p)}.$

In following, illustrate figures \ref{fig2} and \ref{fig3} of standardized density function $l_{\xi}$ for different values of $\xi.$
\begin{figure}[!ht]
	\centering
	\includegraphics[width=.7\textwidth]{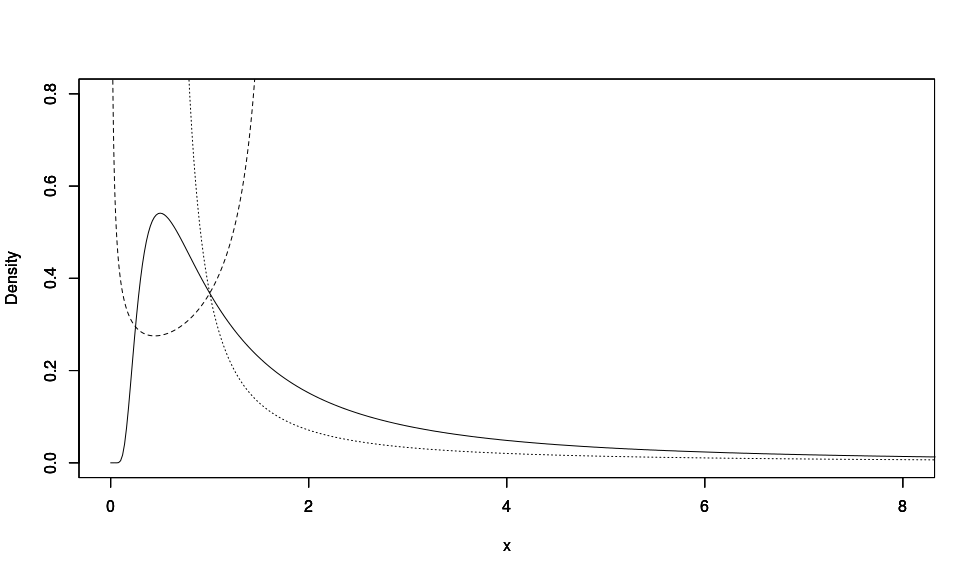}
	\caption{graph of density function  with positive support for $\xi=-2$ (dash), $\xi=2$ (dots) and standard Fr\'{e}chet (line).}\label{fig2}
\end{figure}
\begin{figure}[!ht]
	\centering
	\includegraphics[width=.7\textwidth]{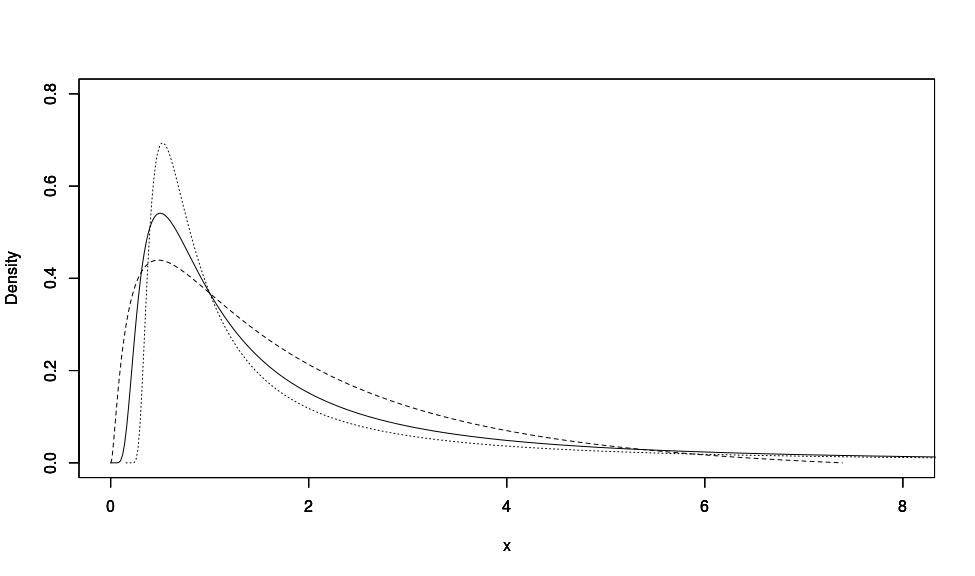}
	\caption{graph of density function  with positive support for $\xi=-0.5$ (dash), $\xi=0.5$ (dots) and standard Fr\'{e}chet (line)}\label{fig3}
\end{figure}

\newpage
\section[Appendix B]{}
\begin{thm}(Resnick 1987, The Karamata representation)
$\mathcal{L}$ is slowly varying iff $\mathcal{L}$ can be represented as
\begin{eqnarray}
\mathcal{L}(x) = c(x) \exp{\Big\{\int_{z}^{x}\frac{u(t)}{t}dt\Big\}},\;\;z<x<\infty,\label{rep}
\end{eqnarray}
where, $c(t)\to c > 0$ and $u(t)\to 0$ as $t \to \infty$ locally uniformly. 
\end{thm}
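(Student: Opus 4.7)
The theorem has two directions, and the sufficiency is much easier than the necessity.

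For the sufficiency direction, the plan is a direct computation. Assume $\mathcal{L}$ has the stated representation. For fixed $\lambda>0$ write
$$\frac{\mathcal{L}(\lambda x)}{\mathcal{L}(x)} = \frac{c(\lambda x)}{c(x)}\,\exp\Bigl\{\int_{x}^{\lambda x}\frac{u(t)}{t}\,dt\Bigr\}.$$
The first factor tends to $c/c=1$ since $c(t)\to c>0$. For the second, substitute $t=xs$ so the integral becomes $\int_{1}^{\lambda}\frac{u(xs)}{s}\,ds$, and the local uniform convergence $u(xs)\to 0$ on $s\in[1,\lambda]$ (or $[\lambda,1]$) forces the integral to $0$. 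Hence $\mathcal{L}(\lambda x)/\mathcal{L}(x)\to 1$, i.e.\ $\mathcal{L}$ is slowly varying.

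For the necessity direction, the plan is to pass to additive form and construct $c,u$ by a smoothing argument. Let $\phi(y)=\log\mathcal{L}(e^{y})$, which is well defined for $y$ large enough (on the range where $\mathcal{L}>0$, which one arranges). Slow variation of $\mathcal{L}$ is equivalent to the statement $\phi(y+s)-\phi(y)\to 0$ as $y\to\infty$ for each fixed $s\in\mathbb R$. The first key input I would invoke is Karamata's uniform convergence theorem, giving that this convergence is in fact locally uniform in $s$; this ensures the smoothing below is well behaved. I would then define the smoothed version
$$\tilde\phi(y)=\int_{0}^{1}\phi(y+s)\,ds,$$
and set
$$v(y):=\tilde\phi'(y)=\phi(y+1)-\phi(y),\qquad w(y):=\phi(y)-\tilde\phi(y).$$
By the uniform convergence, $v(y)\to 0$ and $w(y)\to w_{\infty}$ for some finite limit (since it is a locally averaged difference of $\phi$ against itself on the unit interval).

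The final assembly: write $\phi(y)=w(y)+\tilde\phi(y)=w(y)+\tilde\phi(y_{0})+\int_{y_{0}}^{y}v(s)\,ds$ and exponentiate to get
$$\mathcal{L}(e^{y})=e^{w(y)+\tilde\phi(y_{0})}\exp\Bigl\{\int_{y_{0}}^{y}v(s)\,ds\Bigr\}.$$
Reverting to the variable $x=e^{y}$ via $s=\log t$, $ds=dt/t$, and setting $u(t)=v(\log t)$, $c(x)=e^{w(\log x)+\tilde\phi(y_{0})}$, $z=e^{y_{0}}$, gives exactly the required representation, with $c(x)\to e^{w_{\infty}+\tilde\phi(y_{0})}=:c>0$ and $u(t)\to 0$ as $t\to\infty$.

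The main obstacle is the necessity direction, and within it the step where one claims $w(y)=\phi(y)-\tilde\phi(y)$ has a finite limit and that all convergence statements are genuinely locally uniform (so that the smoothed $\tilde\phi$ is differentiable with the stated derivative and so that the integrals converge). This rests entirely on Karamata's uniform convergence theorem for slowly varying functions, which must be cited (or proved separately) to upgrade pointwise to uniform convergence of $\phi(\cdot+s)-\phi(\cdot)$ on compact $s$-sets; everything else is bookkeeping with the change of variables $t=e^{s}$.
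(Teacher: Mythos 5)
The paper itself gives no proof of this statement: it is quoted verbatim, with attribution, as the Karamata representation theorem from Resnick (1987), so the only comparison available is with the canonical proof --- and yours is essentially that proof, correct in both directions: the easy direction by estimating $\int_{x}^{\lambda x}u(t)t^{-1}\,dt$ (for which, incidentally, plain convergence $u(t)\to 0$ already suffices, no local uniformity needed), and the hard direction by passing to $\phi(y)=\log\mathcal{L}(e^{y})$, invoking Karamata's uniform convergence theorem, and splitting $\phi$ into its unit-interval average $\tilde\phi$ plus a remainder. Three small points to tighten. First, the uniform convergence theorem (and hence your whole necessity argument) requires $\mathcal{L}$ to be measurable, an hypothesis the paper's statement suppresses but which Resnick assumes; measurability plus UCT also yields the local boundedness and local integrability of $\phi$ on some $[y_{0},\infty)$ that you need for $\tilde\phi(y)=\int_{0}^{1}\phi(y+s)\,ds$ even to be defined. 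Second, the claim $\tilde\phi'(y)=\phi(y+1)-\phi(y)$ is delicate for merely measurable $\phi$ (it holds only a.e.); it is cleaner, and fully rigorous, to bypass differentiation and verify the integrated identity directly by additivity of the integral, $\tilde\phi(y)-\tilde\phi(y_{0})=\int_{y_{0}+1}^{y+1}\phi-\int_{y_{0}}^{y}\phi=\int_{y_{0}}^{y}\bigl(\phi(s+1)-\phi(s)\bigr)\,ds$, which is all your final assembly uses. Third, your $w_{\infty}$ is not merely some finite limit: uniform convergence on $[0,1]$ gives $w(y)=-\int_{0}^{1}\bigl(\phi(y+s)-\phi(y)\bigr)\,ds\to 0$, so $w_{\infty}=0$ and $c=e^{\tilde\phi(y_{0})}>0$; stating this removes the only soft spot in the argument, since the existence of the limit is exactly what must be proved, not assumed.
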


\begin{thm}\label{RV1}(Mohan and Ravi 1993)
(a) $F\in\D_p(K_{1,\alpha})$ iff $r(F)=\infty$ and $\bar{F}(\exp(.))$ is regularly varying at $\infty$ with exponent $(-\alpha).$
(b) $F\in\D_p(K_{4,\alpha})$ iff $r(F)=0$ and $\bar{F}(-\exp(-(.)))$ is regularly varying at $\infty$ with exponent $(-\alpha).$
\end{thm}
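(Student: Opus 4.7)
The plan is to reduce both parts to the classical characterization of the $\ell$-max domain of attraction of the Fr\'{e}chet law $\Phi_\alpha$, via a logarithmic change of variable that converts power normalization into linear normalization. Recall the well-known criterion: $F\in\D_\ell(\Phi_\alpha)$ iff $r(F)=\infty$ and $\bar F$ is regularly varying at $\infty$ with exponent $-\alpha$. This criterion will be invoked in each part after rewriting (\ref{pmax_lim}) in terms of the transformed variables.

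For (a), the support of $K_{1,\alpha}$ is $[1,\infty)$, so mass conservation forces $\bigvee_{i=1}^n X_i>0$ eventually and $r(F)=\infty$. Set $Y_i=\log X_i$, so the inequality $(\bigvee X_i/\delta_n)^{1/\beta_n}\leq x$ for $x>1$ is equivalent to $\bigvee Y_i\leq \log\delta_n+\beta_n\log x$. Writing $a_n=\beta_n>0$, $b_n=\log\delta_n$, and substituting $y=\log x>0$, the p-max convergence in (\ref{pmax_lim}) reads
$$\Pr\bigl(\bigvee Y_i\leq b_n+a_n y\bigr)\to K_{1,\alpha}(e^y)=\exp(-y^{-\alpha})=\Phi_\alpha(y).$$
Hence $F\in\D_p(K_{1,\alpha})$ iff $F_Y\in\D_\ell(\Phi_\alpha)$, and since $\bar F_Y(y)=\bar F(e^y)$, the classical criterion produces exactly the statement that $\bar F(\exp(\cdot))\in RV_{-\alpha}$.

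For (b), $K_{4,\alpha}$ is supported on $[-1,0)$, forcing $\bigvee X_i<0$ eventually and $r(F)=0$. Set $Y_i=-\log(-X_i)$; then $Y_i\to\infty$ as $X_i\to 0^-$, so $r(F_Y)=\infty$. A direct manipulation (using $\mathrm{sign}(\bigvee X_i)=-1$) shows that the event $(|\bigvee X_i|/\delta_n)^{1/\beta_n}\mathrm{sign}(\bigvee X_i)\leq x$ for $-1\leq x<0$ is equivalent to
$$\bigvee Y_i\leq -\log\delta_n-\beta_n\log(-x).$$
Set $a_n=\beta_n>0$, $b_n=-\log\delta_n$, $y=-\log(-x)>0$; then the right-hand side of (\ref{pmax_lim}) becomes $K_{4,\alpha}(-e^{-y})=\exp(-y^{-\alpha})=\Phi_\alpha(y)$. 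Thus again $F\in\D_p(K_{4,\alpha})\Leftrightarrow F_Y\in\D_\ell(\Phi_\alpha)$, and computing $\bar F_Y(y)=\Pr(-\log(-X)>y)=\bar F(-e^{-y})$ yields the claimed regular variation of $\bar F(-\exp(-(\cdot)))$ with exponent $-\alpha$.

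The main obstacle is the sign bookkeeping in (b): one must chase the $\mathrm{sign}$ factor, the reversal of the inequality under division by a negative quantity, and the substitution $y=-\log(-x)$ together, so that the resulting linearly normalized limit comes out as $\Phi_\alpha$ with positive norming constants rather than, say, a Weibull or a reflected Fr\'{e}chet. Once the change of variables is correctly aligned in each part, the rest is a single application of the Fr\'{e}chet-domain characterization.
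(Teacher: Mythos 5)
The paper contains no proof of this theorem for you to be compared against: it is stated in an appendix as a quoted result of Mohan and Ravi (1993), and the body of the paper only invokes it (in Theorems \ref{thm1} and \ref{thm3}). Judged on its own merits, your argument is correct, and it is in substance the argument of the original source: the power norming $x\mapsto \delta_n x^{\beta_n}$ becomes the affine norming $u\mapsto \log\delta_n+\beta_n u$ after taking logarithms, so p-max convergence to $K_{1,\alpha}$ (resp.\ $K_{4,\alpha}$) is equivalent to $\ell$-max convergence of $Y=\log X$ (resp.\ $Y=-\log(-X)$) to $\Phi_\alpha$, and the classical Fr\'echet criterion ($F_Y\in\D_\ell(\Phi_\alpha)$ iff $r(F_Y)=\infty$ and $\bar F_Y\in RV_{-\alpha}$) then yields exactly the stated conditions. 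Your sign bookkeeping in (b) is right: for $-1\le x<0$ the event is $\bigvee X_i\le-\delta_n\abs{x}^{\beta_n}$, equivalently $\bigvee Y_i\le -\log\delta_n-\beta_n\log(-x)$, and $K_{4,\alpha}(-e^{-y})=\Phi_\alpha(y)$.

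Two spots should be tightened, though neither is a fatal gap. First, you assert the endpoint conditions ($r(F)=\infty$ in (a), $r(F)=0$ in (b)) up front by ``mass conservation''; as written that is a hand-wave, since a priori the sequences $\delta_n,\beta_n$ might compensate for a finite endpoint, and in (b) you even need $r(F)\le 0$ before $Y=-\log(-X)$ is well defined. The clean order is the opposite: adopt the convention $Y=-\infty$ on $\{X\le0\}$ in (a) (resp.\ $Y=+\infty$ on $\{X\ge0\}$ in (b)), so that the identities $\Pr\bigl(\bigvee_{i\le n}Y_i\le u\bigr)=F^n(e^u)$ (resp.\ $=F^n(-e^{-u})$) hold exactly for all real $u$ with no prior knowledge of the support; deduce the equivalence of the two convergences from these identities (the degenerate cases are excluded automatically, e.g.\ in (b) positive mass on $\{X\ge0\}$ would force $F^n(-e^{-u})\le\Pr(X<0)^n\to0$, contradicting $\Phi_\alpha(y)>0$); and only then read the endpoint off the Fr\'echet criterion: $r(F_Y)=\infty$ gives $r(F)=\infty$ in (a), while in (b) it gives $r(F)\ge0$, and since $\bar F(-e^{-u})\to\Pr(X\ge0)$ as $u\to\infty$, regular variation with exponent $-\alpha<0$ already forces $\Pr(X\ge0)=0$, hence $r(F)=0$. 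Second, in the converse directions you should record the one-line check on the ranges of $x$ not covered by your substitution, e.g.\ in (a) for $x<0$ one has $\Pr\bigl((\abs{\bigvee X_i}/\delta_n)^{1/\beta_n}\text{sign}(\bigvee X_i)\le x\bigr)\le F^n(0)\to0=K_{1,\alpha}(x)$ because $F(0)<1$ when $r(F)=\infty$. With these additions your proof is complete and coincides with the standard one.
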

\begin{thm}\label{RV2}(Mohan and Ravi 1993)
(a) $F\in\D_p(K_{2,\alpha})$ iff $0<r(F)<\infty$ and $\bar{F}(r(F)\exp(-1/(.)))$ is regularly varying at $\infty$ with exponent $(-\alpha).$
(b) $F\in\D_p(K_{5,\alpha})$ iff $r(F)<0$ and $\bar{F}(r(F)\exp(1/(.)))$ is regularly varying at $\infty$ with exponent $(-\alpha).$
\end{thm}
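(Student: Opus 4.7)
My plan is to reduce both statements to regular variation via a change of variables in the $p$-max convergence condition, in much the same spirit as the proofs of Theorems~\ref{thm1} and~\ref{thm3}. For part (a), assume $F \in \D_p(K_{2,\alpha})$. Because $K_{2,\alpha}$ is supported on $[0,1]$ with a jump up to $1$ at the right endpoint, the $p$-max relation $F^n(\delta_n x^{\beta_n}) \to K_{2,\alpha}(x)$ forces $\delta_n \to r(F)$, so necessarily $0 < r(F) < \infty$. Taking $\delta_n = r(F)$ (WLOG, up to asymptotic equivalence), the condition becomes
\begin{equation*}
n\bar{F}(r(F) x^{\beta_n}) \to (-\log x)^{\alpha}, \qquad 0<x<1.
\end{equation*}

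The next step is the substitution $u = -\log x \in (0,\infty)$, so that $x^{\beta_n} = e^{-\beta_n u}$, rewriting the limit as $n\bar{F}(r(F)e^{-\beta_n u}) \to u^{\alpha}$. Define $V(t) := \bar{F}(r(F)e^{-1/t})$ for $t>0$ and pick $\beta_n$ so that $V(1/\beta_n)\sim 1/n$. Dividing the relations for two values of $u$ gives $V(t_n/u)/V(t_n) \to u^{\alpha}$ with $t_n := 1/\beta_n \to \infty$; writing $\lambda = 1/u$, this is $V(\lambda t_n)/V(t_n) \to \lambda^{-\alpha}$, i.e.\ the regular variation ratio of index $-\alpha$ along the sequence $t_n$. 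Conversely, if $V \in RV_{-\alpha}$ at $\infty$, define $\beta_n$ by $V(1/\beta_n) = 1/n$ and simply run the substitution in reverse to recover $F^n(r(F)x^{\beta_n}) \to K_{2,\alpha}(x)$, giving $F \in \D_p(K_{2,\alpha})$.

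Part (b) is the mirror image. Since $K_{5,\alpha}$ is supported on $(-\infty,-1]$ with $K_{5,\alpha}(-1)=1$, the analogous support argument forces $r(F)<0$ and $\delta_n \to |r(F)|$. For $x<-1$ the $p$-max condition collapses to $F^n(r(F)(-x)^{\beta_n}) \to \exp(-(\log(-x))^{\alpha})$. Substituting $-x = e^u$ with $u>0$ produces $n\bar{F}(r(F)e^{\beta_n u}) \to u^{\alpha}$, after which setting $W(t) := \bar{F}(r(F)e^{1/t})$ and repeating the dilation/inversion argument of part (a) yields $W \in RV_{-\alpha}$, and conversely.

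The main obstacle is not the algebra, which is a routine change of variables, but passing from ``the regular variation ratio holds along the specific sequence $t_n = 1/\beta_n$'' to the full continuous condition $V(\lambda t)/V(t) \to \lambda^{-\alpha}$ as $t \to \infty$ for every $\lambda>0$. I expect to resolve this by exploiting the monotonicity of $\bar{F}$, and hence of $V$ and $W$: an arbitrary $t\to\infty$ can be sandwiched between consecutive $t_n,t_{n+1}$, and the squeeze combined with $t_{n+1}/t_n \to 1$ (which follows from the slow variation of $1/V$ against its inverse) upgrades sequential convergence to genuine regular variation. This is exactly the standard device used when proving equivalences of $\ell$-max domain conditions with Karamata regular variation, and once it is in place both (a) and (b) are mechanical.
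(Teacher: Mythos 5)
The paper contains no proof of this statement to compare yours against: Theorem~\ref{RV2} is quoted in the appendix verbatim from Mohan and Ravi (1993), as an external ingredient for Theorems~\ref{thm1} and~\ref{thm3}. Your argument is nonetheless a legitimate route, and it differs from the classical one. Mohan and Ravi obtain these criteria via the logarithmic transformation that converts power norming into linear norming: writing $F^n(\delta_n x^{\beta_n})$ for the normalized maximum on positive support turns the problem into a linear max problem for $\log X_i$, and since $K_{2,\alpha}(x)=\Psi_\alpha(\log x)$ for $0<x\le 1$ and $K_{5,\alpha}(x)=\Psi_\alpha(-\log(-x))$ for $x<-1$, both parts reduce to the known $\ell$-max Weibull criterion ($G\in\D_\ell(\Psi_\alpha)$ iff $r(G)<\infty$ and $\bar{G}(r(G)-1/t)\in RV_{-\alpha}$), after which the stated forms follow by unwinding the substitution. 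You instead rebuild the Karamata-type equivalence from scratch inside the p-max setting, which buys self-containedness at the cost of having to re-prove the standard technical steps yourself.

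Two of those steps are where your sketch is thin. First, the claim that the limit relation forces $0<r(F)<\infty$ (resp.\ $r(F)<0$) and that one may take $\delta_n=r(F)$ is genuine content, not a one-line ``support argument'': it requires Pancheva's convergence-of-types theorem for power normalization, or equivalently the transformed linear fact that a Weibull-type limit forces a finite endpoint with norming centers tending to it. (Also, $K_{2,\alpha}$ is continuous at $1$; there is no jump there, though nothing in your argument actually uses one.) Second, your stated justification of $t_{n+1}/t_n\to 1$ --- ``slow variation of $1/V$ against its inverse'' --- is circular as written, since regular variation of $V$ is precisely what is being proved. The gap is repairable from what you already have: since $nV(t_n/u)\to u^{\alpha}$ holds for \emph{every} $u>0$, monotonicity of $V$ gives the ratio limit directly --- if $t_{n+1}\ge(1+\epsilon)t_n$ infinitely often, then $nV(t_{n+1})\le nV((1+\epsilon)t_n)\to(1+\epsilon)^{-\alpha}<1$, contradicting $nV(t_{n+1})\to 1$, and symmetrically on the other side --- after which your sandwich between consecutive $t_n$ upgrades sequential convergence to genuine regular variation, exactly the standard Khintchine-type device you allude to. With these two repairs the forward direction is sound, the converse via the quantile choice of $\beta_n$ is routine (relative jump sizes of a function in $RV_{-\alpha}$ vanish, so $nV(1/\beta_n)\to 1$ is attainable), and part (b) mirrors part (a) as you say.
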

\begin{thm}\label{Ap_thm1}(Mohan and Ravi 1993)
A df $F\in\D_p(\Phi_1)$ if and only if $r(F) > 0,$ and
\[\lim_{t\uparrow r(F)}\frac{\bar{F}(t\exp(yu(t)))}{\bar{F}(t)}=\exp(-y)\]
for some positive valued function $u.$
\end{thm}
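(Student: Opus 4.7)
The plan is to reduce the statement to the well-known characterization of the Gumbel max-domain of attraction via the monotone transformation $y=\log t$. The key observation is that $\Phi_1(e^y)=\exp(-e^{-y})=\Lambda(y)$, the standard Gumbel law, so the power normalization $(M_n/\delta_n)^{1/\beta_n}\to \Phi_1$ translates, after taking logarithms, into the linear normalization $(\log M_n-\log\delta_n)/\beta_n \to \Lambda$. Hence $F\in\D_p(\Phi_1)$ should be equivalent to $F_Y\in\D_\ell(\Lambda)$ where $Y=\log X$ and $F_Y(y)=F(e^y)$, and the conclusion then follows by invoking the classical de Haan characterization for the Gumbel domain and pulling the resulting condition back through the exponential.

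Concretely I would proceed as follows. First I would show that $r(F)>0$ is forced by the hypothesis $F\in\D_p(\Phi_1)$: since $\Phi_1$ is supported on $(0,\infty)$ with $\Phi_1(0+)=0$, the p-max limit law (\ref{pmax_lim}) forces $\Pr(X>0)>0$ for sufficiently large quantiles, so the upper endpoint of $F$ must be positive. Next, for $t$ close to $r(F)$ the event $\{X>0\}$ has full mass in the relevant tail, so working with the positive-part distribution (or equivalently with $Y=\log X$ on $X>0$) is harmless. Under this transformation the norming constants pair up as $a_n=\beta_n$ (linear scale) and $b_n=\log\delta_n$ (linear location), and the upper endpoint becomes $r(F_Y)=\log r(F)$, which is $\infty$ when $r(F)=\infty$ and finite otherwise.

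Having made this reduction, I would apply the standard result (see e.g.\ Resnick 1987) that $F_Y\in\D_\ell(\Lambda)$ if and only if there exists a positive auxiliary function $a(\cdot)$ with
\begin{equation*}
\lim_{y\uparrow r(F_Y)} \frac{\bar F_Y\bigl(y+v\,a(y)\bigr)}{\bar F_Y(y)}=e^{-v},\qquad v\in\Real.
\end{equation*}
Substituting $t=e^y$ and defining $u(t):=a(\log t)$, one has $\bar F_Y(y)=\bar F(t)$ and $\bar F_Y(y+v\,a(y))=\bar F(t\,e^{v\,u(t)})$, which yields exactly the claimed relation. The converse direction is symmetric: given the stated condition with $u>0$, define $a(y):=u(e^y)$ and reverse the substitution to recover the Gumbel characterization for $F_Y$, then exponentiate back to obtain $F\in\D_p(\Phi_1)$ with norming constants $\delta_n=e^{b_n}$, $\beta_n=a_n$, where $(a_n,b_n)$ are the Gumbel norming constants for $F_Y$.

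The main technical obstacle is to treat the two regimes $r(F)=\infty$ and $0<r(F)<\infty$ uniformly and to exhibit an explicit positive function $u$. A canonical choice, mirroring de Haan's mean-excess representation, is
\begin{equation*}
u(t)=\frac{1}{\bar F(t)}\int_t^{r(F)}\frac{\bar F(s)}{s}\,ds,
\end{equation*}
which is positive whenever $\bar F$ is, and which translates under $y=\log t$ to the classical auxiliary function $a(y)=\int_y^{r(F_Y)}\bar F_Y(s)\,ds/\bar F_Y(y)$ known to work in the Gumbel case. Verifying that this $u$ is well-defined near $r(F)$ and satisfies the limiting relation is the only genuine calculation required; everything else is bookkeeping through the $\log/\exp$ change of variables.
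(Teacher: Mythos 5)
Your proposal is correct, and it coincides with the approach behind the paper's source for this statement: the paper itself gives no proof of Theorem \ref{Ap_thm1} (it is quoted from Mohan and Ravi 1993 in the appendix), and the standard argument there is exactly your reduction --- since $\Phi_1(e^y)=\exp(-e^{-y})=\Lambda(y)$ is the Gumbel law, p-max convergence with constants $(\delta_n,\beta_n)$ is equivalent to $F(e^{\cdot})\in\D_\ell(\Lambda)$ with linear constants $a_n=\beta_n$, $b_n=\log\delta_n$, and de Haan's auxiliary-function characterization of the Gumbel domain pulls back under $t=e^y$, $u(t)=a(\log t)$ to the stated tail condition; this is also the same transformation device the paper uses in its own proofs (e.g. $L_{1,\xi}(x)=K_{1,\alpha}(x^{1/\alpha}e)$ in Theorem \ref{thm1}). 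Your treatment of the side issues is likewise sound: $r(F)\leq 0$ would force the p-max limit to equal $1$ on $(0,\infty)$, contradicting $\Phi_1$; the mass of $F$ on $(-\infty,0]$ is invisible in the upper tail because $\bar{F}_Y(y)=\bar{F}(e^y)$ identically; and your explicit choice $u(t)=\frac{1}{\bar{F}(t)}\int_t^{r(F)}\bar{F}(s)s^{-1}ds$ is precisely the image of the classical mean-excess auxiliary function, consistent with the integral condition appearing in Theorem \ref{thm_von}-(3).
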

\begin{thm}\label{Ap_thm2}(Mohan and Ravi 1993)
A df $F\in\D_p(\Psi_1)$ if and only if $r(F) \leq 0,$ and
\[\lim_{t\uparrow r(F)}\frac{\bar{F}(t\exp(yu(t)))}{\bar{F}(t)}=\exp(y),\]
for some positive valued function $u.$
\end{thm}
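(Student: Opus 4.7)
The plan is to reduce this p-max characterization to the classical linear-max characterization of the Gumbel domain of attraction $\mathcal{D}_\ell(\Lambda)$, where $\Lambda(z)=\exp(-e^{-z})$, by means of the logarithmic transformation that turns power norming into linear norming. Because $r(F)\le 0$, the map $x\mapsto -\log(-x)$ is a well-defined strictly increasing bijection from $(-\infty,0)$ onto $(-\infty,-\log(-r(F)))$, so if $Y_i=-\log(-X_i)$ and $G$ denotes its df, we have $G(y)=F(-e^{-y})$, $\bar G(y)=\bar F(-e^{-y})$, and $\bigvee_i Y_i=-\log(-\bigvee_i X_i)$. I will use Lemma~\ref{lem}-style manipulations only at the very end, and import the classical von~Mises/Gumbel characterization as a black box: $G\in\mathcal{D}_\ell(\Lambda)$ iff there exists a positive function $g$ on the support of $G$ with $\lim_{s\uparrow r(G)}\bar G(s+wg(s))/\bar G(s)=e^{-w}$ for every $w\in\Real$.

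For the forward direction, assume $F\in\D_p(\Psi_1)$. Unpacking (\ref{pmax_lim}) with $K=\Psi_1$ and $r(F)\le 0$ (so $\bigvee_i X_i\le 0$ and $\text{sign}(\bigvee_i X_i)=-1$ eventually), the convergence is equivalent to $F^n(-\delta_n y^{\beta_n})\to e^{-y}$ for every $y>0$. Writing $y=e^{z}$, $z\in\Real$, this reads $F^n(-e^{\log\delta_n+\beta_n z})\to\Lambda(-z)$, and applying the identity $F(-e^v)=G(-v)$ yields $G^n(-\log\delta_n+\beta_n z')\to\Lambda(z')$ after the innocuous sign flip $z'=-z$. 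Hence $G\in\mathcal{D}_\ell(\Lambda)$ with linear norming constants $a_n=\beta_n$ and $b_n=-\log\delta_n$, so the classical characterization supplies a positive $g$ with the Gumbel limit relation.

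Now I translate that limit back to $F$. Substitute $t=-e^{-s}$ (so $s=-\log(-t)$ and $t\uparrow r(F)$ as $s\uparrow r(G)$) and compute
\[
\frac{\bar G(s+wg(s))}{\bar G(s)}=\frac{\bar F(-e^{-s}e^{-wg(s)})}{\bar F(-e^{-s})}=\frac{\bar F(t\exp(-wu(t)))}{\bar F(t)},
\]
where $u(t):=g(-\log(-t))>0$. The Gumbel limit $e^{-w}$ becomes $e^y$ after setting $y=-w$, which is exactly the stated relation. Conversely, starting from the existence of a positive $u$ with the stated limit, I reverse the substitution to obtain the Gumbel limit relation for $G$, deduce $G\in\mathcal{D}_\ell(\Lambda)$, pick norming constants $a_n,b_n$ for that convergence, and set $\beta_n=a_n$, $\delta_n=e^{-b_n}$; the same algebraic identity then rewrites $G^n(b_n+a_n z)\to\Lambda(z)$ as $F^n(-\delta_n y^{\beta_n})\to e^{-y}$, which is $F\in\D_p(\Psi_1)$.

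I expect the main technical care, rather than a genuine obstacle, to lie in bookkeeping: tracking signs through the chain $y\leftrightarrow z\leftrightarrow w$, verifying that $u$ is indeed positive (inherited directly from $g>0$), and handling the two sub-cases $r(F)=0$ (so $r(G)=+\infty$) and $r(F)<0$ (so $r(G)=-\log(-r(F))<\infty$) uniformly; in both cases the transformation is a homeomorphism near the upper endpoint and the classical Gumbel characterization applies without change. No non-trivial analytic input beyond the classical $\mathcal{D}_\ell(\Lambda)$ result is needed.
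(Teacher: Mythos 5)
The paper offers no proof of this statement to compare against: Theorem \ref{Ap_thm2} is quoted verbatim from Mohan and Ravi (1993) and used as a black box (it enters only through Remark 2.1 for the $\xi=0$ case). Judged on its own merits, your reduction is correct and is, in substance, the standard argument behind the cited result: the increasing bijection $x\mapsto-\log(-x)$ turns power norming into linear norming, the identity $\bar G(y)=\bar F(-e^{-y})$ identifies $F\in\D_p(\Psi_1)$ with $G\in\mathcal{D}_\ell(\Lambda)$ (your computation $F^n(-\delta_n y^{\beta_n})\to e^{-y}$, $y=e^z$, $a_n=\beta_n$, $b_n=-\log\delta_n$ checks out, as does the reverse substitution $\beta_n=a_n$, $\delta_n=e^{-b_n}$), and the stated tail-ratio condition is precisely the classical Gumbel criterion (de Haan and Ferreira 2006, Theorem 1.2.5, or Resnick 1987) pulled back through $t=-e^{-s}$, $u(t)=g(-\log(-t))$, $y=-w$; the sign bookkeeping and the two endpoint cases $r(F)=0$ (so $r(G)=\infty$) and $r(F)<0$ (so $r(G)=-\log(-r(F))$) are handled correctly.

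One small omission worth patching: in the forward direction the conclusion $r(F)\le 0$ must be derived, not assumed as part of the unpacking. It is a one-line fix: if $r(F)>0$ then $F^n(0)\to 0$, and for $x<0$ the event that the power-normalized maximum is $\le x$ is contained in $\{\bigvee_{i=1}^n X_i\le 0\}$, so the limit in (\ref{pmax_lim}) would vanish on $(-\infty,0)$ and could not equal $\Psi_1(x)\in(0,1)$; similarly one should note that the limit condition rules out an atom of $F$ at $0$, since otherwise $\bar F(t)$ stays bounded away from $0$ as $t\uparrow 0$ and the ratio tends to $1\neq e^{y}$. With that sentence added and the classical $\mathcal{D}_\ell(\Lambda)$ characterization cited explicitly, your proof is complete and matches the route by which such p-max criteria are established in the source literature.
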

\begin{thm}[Mohan and Subramanya (1998)] \label{thm_von}
Let df $F$ has pdf $f > 0$ on $(l(F), r(F)),$ and for some $\alpha>0,$
\begin{enumerate}
\item $\;F\in \mathcal D_p(K_{1,\alpha}),$ if $r(F)=\infty \quad \mbox{and} \quad \lim_{x\rightarrow \infty} \frac{x f(x)\log x}{1-F(x)}=\alpha.$
\item $\;F\in\mathcal D_p(K_{2,\alpha}),$ if
$ 0<r(F)<\infty \;\; \mbox{and} \;\; \lim_{x\rightarrow r(F)} \frac{xf(x)\log \left(\frac{r(F)}{x}\right) }{1-F(x)}=\alpha.$
\item $\;F\in\mathcal D_p(K_3),$ if
\begin{equation}\label{pvon3}
r(F)>0, \;\;  \int_{x}^{r(F)}\frac{(1-F(t))}{t}dt < \infty \; \mbox{and} \;
\lim_{x\rightarrow r(F)} \frac{x f(x)}{(1-F(x))^2}\int_{x}^{r(F)}\frac{(1-F(t))}{t}dt=1.\nonumber
\end{equation}

\item $\;F\in\mathcal D_p(K_{4,\alpha})$ if
$r(F)=0 \;\; \mbox{and} \;\; \lim_{x\uparrow 0} \frac{x f(x)\log (-x)}{1-F(x)}=\alpha.$

\item $\;F\in\mathcal D_p(K_{5,\alpha})$ if
$r(F)<0 \;\; \mbox{and} \;\; \lim_{x\uparrow r(F)} \frac{x f(x)\log \left(\frac{r(F)}{x}\right)}{1-F(x)}=\alpha.$

\item $\;F\in\mathcal D_p(K_6)$ if
\begin{equation}\label{pvon6}
r(F)\leq 0 ,\;\;   -\int_{-\infty }^{r(F)}\frac{(1-F(t))}{t}dt < \infty \; \mbox{and} \;\lim_{x\uparrow r(F)} \dfrac{x f(x)}{(1-F(x))^2}\int_{x}^{r(F)}\frac{1-F(t)}{t}dt=1.\nonumber
\end{equation}
\end{enumerate}
\end{thm}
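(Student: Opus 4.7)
The plan is to reduce each of the six sufficient conditions to one of the regular-variation characterizations in Theorems \ref{RV1} and \ref{RV2} (for the power-type cases (1), (2), (4), (5)) or to the Gumbel-type conditions in Theorems \ref{Ap_thm1} and \ref{Ap_thm2} (for (3) and (6)). The bridge in the first group is the classical von Mises criterion: if $H:(a,\infty)\to(0,\infty)$ is differentiable with $-yH'(y)/H(y)\to\rho$ as $y\to\infty$, then $H\in RV_{-\rho}$.

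For (1), set $H(y):=\bar{F}(e^y)$; by Theorem \ref{RV1}(a) it is enough to show $H\in RV_{-\alpha}$. The chain rule gives $H'(y)=-e^y f(e^y)$, and with $x=e^y$,
\[-\frac{yH'(y)}{H(y)}=\frac{x\,f(x)\,\log x}{\bar{F}(x)}\longrightarrow\alpha,\]
closing the case. For (4) use the symmetric substitution $H(y):=\bar{F}(-e^{-y})$, $x=-e^{-y}\uparrow 0$, and Theorem \ref{RV1}(b). For (2), take $H(y):=\bar{F}(r(F)e^{-1/y})$ so that $x=r(F)e^{-1/y}\uparrow r(F)$ and $y=1/\log(r(F)/x)$; a short calculation yields
\[-\frac{yH'(y)}{H(y)}=\frac{x\,f(x)\,\log(r(F)/x)}{\bar{F}(x)}\longrightarrow\alpha,\]
so $\bar{F}(r(F)\exp(-1/\cdot))\in RV_{-\alpha}$ and Theorem \ref{RV2}(a) applies. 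Case (5) is identical with $H(y):=\bar{F}(r(F)e^{1/y})$.

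The cases (3) and (6) require a separate route, since the p-max stable laws $K_3=\Phi_1$ and $K_6=\Psi_1$ are not captured by pure regular variation of $\bar{F}$ composed with an exponential. Here I would introduce the auxiliary function
\[u(x):=\frac{1}{\bar{F}(x)}\int_x^{r(F)}\frac{\bar{F}(t)}{t}\,dt,\]
well defined by the integrability hypothesis. An integration by parts together with the assumed limit $xf(x)u(x)/\bar{F}(x)\to 1$ forces $u$ to vanish in the appropriate relative sense at $r(F)$, and a de Haan-type monotone-density argument then yields
\[\lim_{x\uparrow r(F)}\frac{\bar{F}(x\exp(y\,u(x)))}{\bar{F}(x)}=e^{\mp y}\]
uniformly for $y$ in compact sets (minus sign for (3), plus sign for (6)), which is exactly the hypothesis of Theorem \ref{Ap_thm1} (respectively \ref{Ap_thm2}).

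The main obstacle is cases (3) and (6): identifying the correct auxiliary function and pushing through the uniform convergence needed to feed Theorems \ref{Ap_thm1}/\ref{Ap_thm2} is a genuinely delicate de Haan-style argument requiring control of the monotone density. By contrast the four power-type cases collapse, after a natural $\log$ or $1/\log$ substitution, to a one-line application of the von Mises criterion and the relevant regular-variation characterization.
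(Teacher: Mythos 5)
The paper itself offers no proof of this statement: Theorem \ref{thm_von} is imported verbatim from Mohan and Subramanya (1998) and stated without argument in the appendix, so there is no internal proof to compare yours against; I can only judge your proposal on its own merits and against the standard literature argument. Your four power-type cases are correct and complete: the substitutions $H(y)=\bar{F}(e^y)$, $\bar{F}(-e^{-y})$, $\bar{F}(r(F)e^{-1/y})$, $\bar{F}(r(F)e^{1/y})$ do turn each hypothesis into exactly $-yH'(y)/H(y)\to\alpha$, and the von Mises criterion for regular variation plus Theorems \ref{RV1}--\ref{RV2} closes each case (I checked the sign bookkeeping in (4) and (5); it works because $x\log(-x)>0$ near $0^-$ and $x\log(r(F)/x)>0$ for $x<r(F)<0$).

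For cases (3) and (6) your strategy is right but you overestimate the difficulty and introduce one sign slip. No fresh de Haan-style monotone-density argument is needed: the same logarithmic substitution that handles the other cases reduces these to the \emph{classical linear} von Mises sufficient condition for the Gumbel domain. Concretely, with $H(s)=\bar{F}(e^s)$ one has $xf(x)=-H'(\log x)$ and $\int_x^{r(F)}\bar{F}(t)t^{-1}dt=\int_{\log x}^{\log r(F)}H(s)\,ds$, so hypothesis (3) is literally $-H'(w)\int_w^{w^*}H\,/\,H(w)^2\to 1$, whence the classical theorem gives $H(w+ya(w))/H(w)\to e^{-y}$ with $a(w)=\int_w^{w^*}H/H(w)$, which is exactly the pointwise condition of Theorem \ref{Ap_thm1} with $u(x)=a(\log x)$ (uniformity on compacts, which you worried about, is not even required there). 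Case (6) goes the same way via $H(s)=\bar{F}(-e^{-s})$, but note that your auxiliary function $u(x)=\bar{F}(x)^{-1}\int_x^{r(F)}\bar{F}(t)t^{-1}dt$ is \emph{negative} for $x<0$ (the integrand is negative), whereas Theorem \ref{Ap_thm2} demands a positive-valued $u$; the correct choice is its negative, $u(x)=-\bar{F}(x)^{-1}\int_x^{r(F)}\bar{F}(t)t^{-1}dt$, after which $\bar{F}(xe^{yu(x)})/\bar{F}(x)\to e^{y}$ follows as above. With that correction and the reduction made explicit, your proof is complete and coincides with the standard derivation of these p-max von Mises conditions from their linear counterparts.
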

\section{}\label{tab1}
\begin{table}[!ht]
\caption{Variance of .}
\begin{tabular}{lll}
\hline
Family& &Variance\\
\hline
$X^+$ &$\xi< 0$ &$e^{2\left(\mu-\sigma/\xi\right)}\sum_{j=0}^{2}{{2}\choose{j}}(-1)^j\left(M_Y\left(\frac{\sigma}{\abs{\xi}},\frac{1}{\abs{\xi}}\right)\right)^jM_Y\left(\frac{(2-j)\sigma}{\abs{\xi}},\frac{1}{\abs{\xi}}\right)$\\
$X^-$ & $\xi>0$ &$e^{2\left(\mu+\sigma/\xi\right)}\sum_{j=0}^{2}{{2}\choose{j}}(-1)^j\left(M_{Y^{-1}}\left(\frac{\sigma}{\xi},\frac{1}{\abs{\xi}}\right)\right)^jM_{Y^{-1}}\left(\frac{(2-j)\sigma}{\xi},\frac{1}{\abs{\xi}}\right)$\\ 
\hline
\end{tabular}
\end{table}
\section{}
\subsection{MCMC algorithm.}\label{Ape1}
\begin{enumerate}
\item[1.] Initialize the values at $\bm{\theta}^{(0)}=(\mu^{(0)},\sigma^{(0)},\xi^{(0)})$ and the counter at $j = 1.$ 
\item[2.] Simulate $\bm{\theta}^*\sim N(\bm{\theta}^{(j-1)},\omega_{\bm{\theta}}),$
where, $\omega_{\bm{\theta}}$ are chosen small enough.
\item[3.] Accept $\mu^{(j)}=\mu^*$ with probability $\Omega(\mu^*,\mu^{(j-1)})$ where,
\begin{eqnarray}
\Omega(\mu^*,\mu^{(j-1)})=\min\Big\{1,\frac{\pi(\mu^*|\eta^{(j-1)},\xi^{j-1)}) }{\pi(\mu^{(j-1)}|\eta^{(j-1)},\xi^{j-1)})}\Big\};\nonumber
\end{eqnarray}
And otherwise, $\mu^{(j)}=\mu^{(j-1)}.$
\item[4.] Accept $\eta^{(j)}=\eta^*$ with probability $\Omega(\eta^*,\eta^{(j-1)})$ where,
\begin{eqnarray}
\Omega(\eta^*,\eta^{(j-1)})=\min\Big\{1,\frac{\pi(\eta^*|\mu^{(j)},\xi^{(j-1)}) }{\pi(\eta^{(j-1)}|\mu^{(j)},\xi^{j-1)})}\Big\};\nonumber
\end{eqnarray}
And $\eta^{(j)}=\eta^{(j-1)}$ otherwise.
\item[5.] Accept $\xi^{(j)}=\xi^*$ with probability $\Omega(\xi^*,\xi^{(j-1)})$ where,
\begin{eqnarray}
\Omega(\xi^*,\xi^{(j-1)})=\min\Big\{1,\frac{\pi(\xi^*|\mu^{(j)},\eta^{(j)}) }{\pi(\xi^{(j-1)}|\mu^{(j)},\eta^{j)})}\Big\};\nonumber
\end{eqnarray}
And $\eta^{(j)}=\eta^{(j-1)}$ otherwise.
\item[6.] Increasing $j$ and return to step 2.
\end{enumerate}
\vspace{0.5in}
\subsection{Goodness of fit algorithm.}\label{Ape}
To test $H_0:X_1,\ldots,X_n\sim F(x;\theta),$ we can proceed as follows.
\begin{enumerate}
\item[1.] Compute $v_i=F(x_i;\hat{\theta}),$ where the $x_i$'s are in ascending order.
\item[2.] Compute $y_i=\eta^{\leftarrow}(v_i),$ where $\eta(\cdot)$ is the standard normal df and $\eta^{\leftarrow}(\cdot)$ its inverse;
\item[3.] Compute $u_i=\eta((y_i-\bar{y})/s_y),$ where $\bar{y}=\sum_{i=1}^{n}y_i/n$ and $s_y^2=\sum_{i=1}^{n}(y_i-\bar{y})^2/(n-1);$
\item[4.] Calculate
\[W^2=\sum_{i=1}^{n}\left(u_i-\frac{(2i-1)}{2n}\right)^2+\frac{1}{12n},\]
and
\[A^2=-n-\frac{1}{n}\sum_{i=1}^{n}((2i-1)\log(u_i)+(2n+1-2i)\log(1-u_i));\]
\item[5.] Modify $W^2$ into $C=W^2(1+0.5/n)$ and $A=A^2(1+0.75/n+2.25/n^2).$ Reject $H_0$ at the significance level $\alpha$ if the modified statistics exceed the upper tail significance points given in Table 1 of Chen and Balakrishnan (1995).
\end{enumerate}
\vspace{0.5in}
\begin{table}[!ht]
	\centering
	\caption{Goodness of fit tests}\label{Table1}
	\begin{tabular}{p{2cm} p{2cm} p{3cm} p{2cm} p{3cm}}
		\hline
		Laws& $C$ & &$A$ & \\
		\hline
		GEV &$0.0163
		$& $(\text{p-value}>0.5)$ & $0.4576$ & $(\text{p-value}>0.25)$\\
		Gumbel  &$0.0286
		$& $(\text{p-value}>0.5)$& $0.7544$& $(\text{p-value}>0.01)$\\
		PGEV  &$0.0327
		$ & $(\text{p-value}>0.5)$ &$0.9174$ & $(\text{p-value}>0.01)$\\
		\hline
	\end{tabular}
\end{table}

\newpage
\section{}
\begin{figure}[!ht]
\centering
\includegraphics[width=1\textwidth]{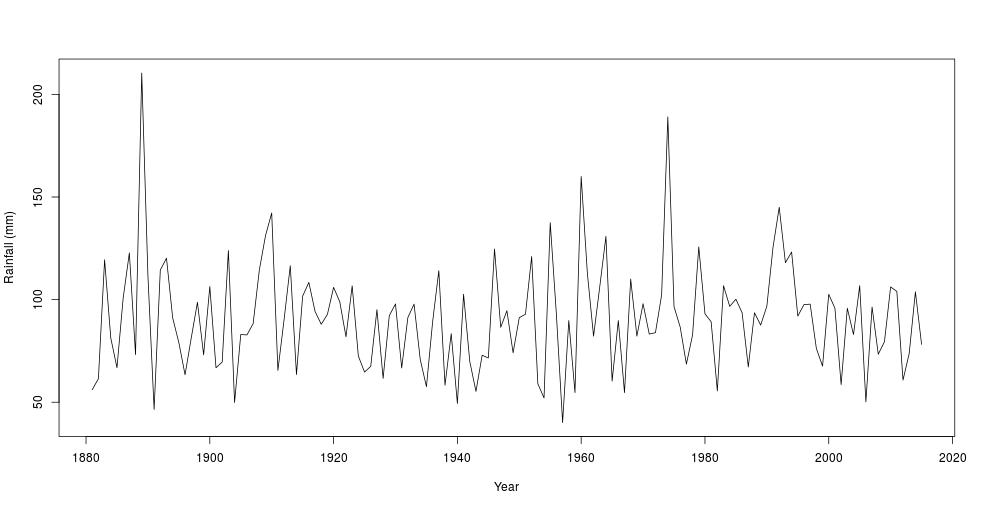}
\caption{Annual maximum rainfall recorded at Eudunda, Australia since
1881}\label{Data}
\end{figure}

\begin{figure}[!ht]
\centering
\includegraphics[width=1\textwidth]{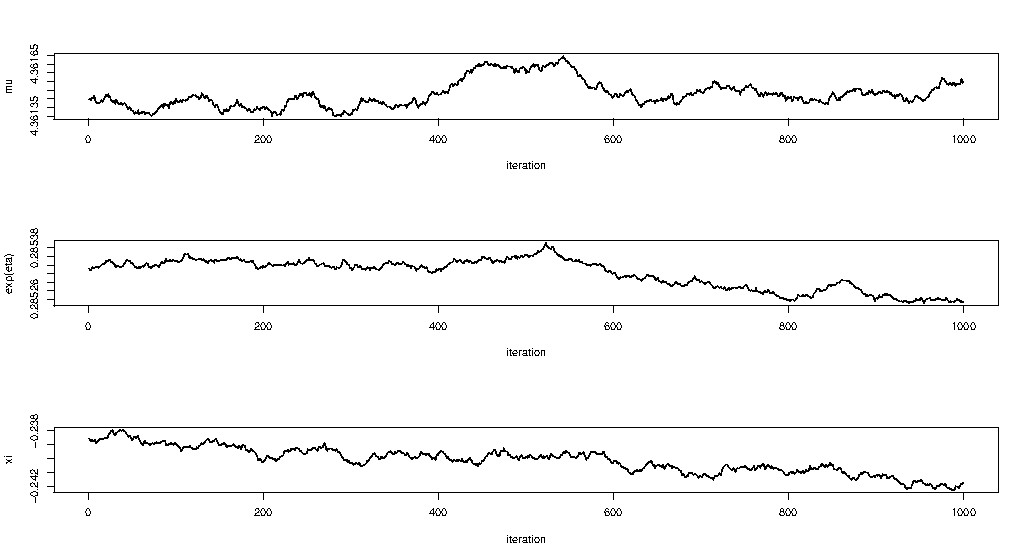}
\caption{MCMC calculations of  parameters in a Bayesian analysis of	the Eudunda, Australia annual maximum rainfall.}\label{graph_bayes}
\end{figure}

\begin{table}[!ht]
	\caption{The values parameters of EV families estimated}\label{Table2}
	\begin{tabular}{p{2cm} p{2cm} p{2cm}p{2cm}p{2cm}}
\hline
& $\mu$ & $\sigma$&  $\xi$&MLE \\ \hline
GEV  &$79.2669$ & $21.9150$ & $-0.0468$
&$-625.1091$\\
\vspace{0.1in}
&$(2.0814)$& $(1.4697)$ & $( 0.0503)$& \\
Gumbel & $78.7020$ & $21.6541$& $-$& $-625.4845$ \\
\vspace{0.1in}
&$(1.9672)$ &$(1.4248)$& &\\

PGEV &$4.3614$&$0.2853$& $-0.2386$&$-626.3673$\\
\vspace{0.1in}
&$(0.0265)$&$(0.0179)$&$( 0.0372)$&\\
\hline
\end{tabular}
\end{table}

\vspace{0.5in}

\begin{table}[!ht]
		\centering
\caption{Return level values for  }\label{Table3}
\begin{tabular}{p{3cm} ccccccc}
\hline
Return Period&$4$&$10$&$15$&$20$&$30$&$35$&$50$\\
Return level(mm)&$107$&$129$&$138$&$144$&$152$&$156$&$162$\\
			\hline
	\end{tabular}
\end{table}

\begin{figure}[!ht]
\centering
\includegraphics[width=1\textwidth]{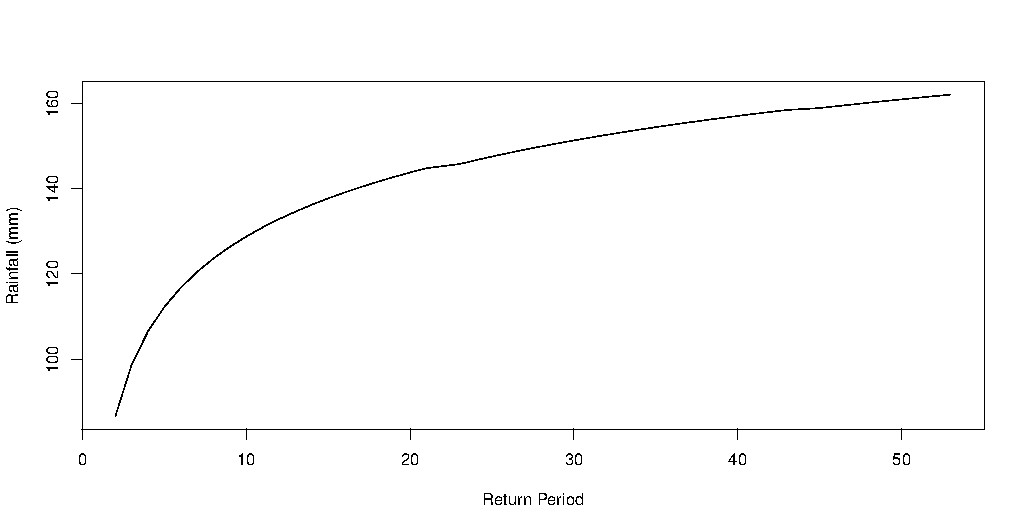}
\caption{Predictive return levels $x_p$ against $p=1/m.$}\label{return_graph}
\end{figure}

\newpage

\end{document}